\pgfplotsset{compat=1.17}
\titlespacing*{\section}{0pt}{1ex}{0.5ex}
\titlespacing*{\subsection}{0pt}{0.8ex}{0.4ex}
\titlespacing*{\subsubsection}{0pt}{0.6ex}{0.3ex}
\titlespacing*{\paragraph}{0pt}{0.5ex}{0.2ex}
\setlist{nosep, itemsep=0pt, parsep=0pt, topsep=2pt, partopsep=0pt}
\newtheorem{theorem}{Theorem}[section]
\newtheorem{definition}[theorem]{Definition}
\newtheorem{lemma}[theorem]{Lemma}
\newcommand{\thetavec}{\boldsymbol{\theta}}
\DeclareMathOperator{\Var}{Var}
\DeclareMathOperator{\RE}{RE}
\DeclareMathOperator{\tr}{tr}
\newcommand{\E}{\mathbb{E}}
\newcommand{\diffop}{\Delta^d}
\newcommand{\Normal}{\mathcal{N}}
\newcommand{\Gammadist}{\text{Gamma}}
\newcommand{\Lognormal}{\text{Lognormal}}
\newcommand{\Chisq}{\chi^2}
\title{\MakeUppercase{Applying the Polynomial Maximization Method to Estimate ARIMA Models with Asymmetric Non-Gaussian Innovations}}
\author{
\textbf{Serhii Zabolotnii}$^{a,b,c}$ \\[0.5em]
{\small $^{a}$\textit{Cherkasy State Business College, Cherkasy, Ukraine}} \\
{\small $^{b}$\textit{State Scientific Research Institute of Armament and Military Equipment}} \\
{\small \textit{Testing and Certification, Cherkasy, Ukraine}} \\
{\small $^{c}$\textit{Uzhhorod National University, Uzhhorod, Ukraine}}
}
\date{
\vspace{-1em}
{\small E-mail: zabolotnii.serhii@csbc.edu.ua}
}
\begin{document}

\maketitle

% ============================================
% ABSTRACT - ENGLISH
% ============================================
\begin{abstract}
\selectlanguage{english}

Classical estimators for ARIMA parameters (MLE, CSS, OLS) assume Gaussian innovations, an assumption frequently violated in financial and economic data exhibiting asymmetric distributions with heavy tails. We develop and validate the second-order polynomial maximization method (PMM2) for estimating ARIMA$(p,d,q)$ models with non-Gaussian innovations. PMM2 is a semiparametric technique that exploits higher-order moments and cumulants without requiring full distributional specification.

Monte Carlo experiments (128,000 simulations) across sample sizes $N \in \{100, 200, 500, 1000\}$ and four innovation distributions demonstrate that PMM2 substantially outperforms classical methods for asymmetric innovations. For ARIMA(1,1,0) with $N=500$, relative efficiency reaches 1.58--1.90 for Gamma, lognormal, and $\chi^2(3)$ innovations (37--47\% variance reduction). Under Gaussian innovations PMM2 matches OLS efficiency, avoiding the precision loss typical of robust estimators.

The method delivers major gains for moderate asymmetry ($|\gamma_3| \geq 0.5$) and $N \geq 200$, with computational costs comparable to MLE. Bridging classical statistical theory and modern computational data science, PMM2 provides an effective, computationally efficient alternative for time series with asymmetric innovations typical of financial markets, macroeconomic indicators, and industrial measurements. To facilitate seamless integration into modern data science workflows, the proposed method has been implemented and published as the open-source \texttt{EstemPMM} package on CRAN. Future extensions include seasonal SARIMA models, GARCH integration, and automatic order selection.
\end{abstract}

\noindent{\small \textbf{Keywords:} ARIMA models; polynomial maximization method; PMM2; non-Gaussian innovations; parameter estimation; asymptotic efficiency; time series; asymmetric distributions; Monte Carlo simulations; computational statistics; data science.}

\vspace{1.5em}

\selectlanguage{english}

\section{Introduction}
\label{sec:introduction}

\subsection{Motivation}
\label{subsec:motivation}

Autoregressive integrated moving-average (ARIMA) models remain among the most widely used tools for analysing and forecasting time series across contemporary science. Since the seminal contribution of Box and Jenkins (1970), ARIMA models have been adopted in financial econometrics, macroeconomic forecasting, meteorological analysis, medical statistics, and numerous other domains~\cite{box2015time,hyndman2021forecasting}.

Classical estimation procedures for ARIMA models---maximum likelihood (MLE), conditional sum of squares (CSS), and ordinary least squares (OLS)---are built upon the central assumption of \textbf{Gaussian innovations}. While this assumption guarantees several desirable statistical properties such as asymptotic efficiency and tractable inference, the explosion of data in modern computational statistics and data science consistently reveals violations of normality in real-world applications.

Recent studies provide compelling evidence of non-Gaussian behaviour across different classes of time series:

\begin{itemize}
    \item \textbf{Financial time series.} Equity returns, exchange rates, and volatility exhibit asymmetric distributions with heavy tails. Even after accounting for volatility dynamics via GARCH models, heavy tails persist~\cite{kantelhardt2002multifractal,kim2012approximation}. A recent study of the Korean stock market confirms persistent heavy tails even after controlling for crisis episodes and volatility clustering~\cite{kim2019fat}.

    \item \textbf{Economic indicators.} Commodity prices, inflation data, and trade volumes are characterised by pronounced asymmetry. An investigation covering 15 economies over 1851--1913 found a strong link between commodity-price asymmetry and inflation, with up to 48\% of inflation variability explained by commodity-price movements~\cite{gerlach2024commodity}.

    \item \textbf{Environmental and meteorological data.} Pollution readings, precipitation, temperature anomalies, and solar activity are frequently skewed and display extreme outcomes. Verma et al. (2025) document heavy tails in solar flare data and discuss theoretical limits of forecasting under heavy-tailed distributions~\cite{verma2025optimal}.

    \item \textbf{High-frequency financial data.} Mixed-stable models applied to DAX constituents at 10-second intervals uncover 43--82\% zero returns (stagnation effects), necessitating specialised modelling techniques~\cite{slezak2023application,dedomenico2023modeling}.
\end{itemize}

\subsection{Limitations of Classical Methods}
\label{subsec:limitations}

When the Gaussian assumption fails, classical estimators for ARIMA parameters face several shortcomings:

\paragraph{Systematic bias and inconsistency.} Pötscher (1991) shows that pseudo-likelihood maximisers can behave dramatically differently from local maxima when the innovation distribution is misspecified. Gaussian pseudo-likelihood may lead to inconsistent estimates under distributional misspecification~\cite{potscher1991noninvertibility}. Fan, Qi, and Xiu (2014) demonstrate that non-Gaussian quasi-MLE becomes inconsistent if the quasi-likelihood does not coincide with the true distribution, proposing a two-step non-Gaussian QMLE that achieves consistency and higher efficiency than the Gaussian QMLE~\cite{fan2014quasi}.

\paragraph{Loss of statistical efficiency.} Even when estimators remain consistent, their variance may be markedly inflated relative to estimators tailored to the true distribution. Zhang and Sin (2012) establish that the limiting laws are mixtures of stable and Gaussian processes for near-unit-root AR processes with $\alpha$-stable noise, underlining the difficulties introduced by heavy tails combined with near-integration~\cite{zhang2012maximum}.

\paragraph{Degraded forecast accuracy.} Li et al. (2020) document sizeable forecast errors for traditional ARIMA models at high frequencies because financial data display irregular fluctuations that require alternative approaches~\cite{li2020forecasting}. Dowe et al. (2025) show that hybrid ARFIMA-ANN strategies better capture complex non-Gaussian dynamics in financial and environmental data while leveraging the minimum message length principle for model selection~\cite{dowe2025novel}.

\paragraph{Misleading confidence intervals.} Ledolter (1989) demonstrates that ignoring outliers inflates mean-squared forecast error and biases parameter estimates in stock-price applications~\cite{ledolter1989inference}. Consequently, uncertainty bands become understated or overstated, undermining decision making.

\subsection{Existing Approaches: Brief Overview}
\label{subsec:existing_approaches}

In response to non-Gaussianity in time series, several alternative approaches have been developed by the research community:

\paragraph{Robust estimation methods (M-estimators).} Initiated by the classic work of Huber (1964)~\cite{huber1964robust}, M-estimators minimise robust loss functions that are less sensitive to outliers and heavy tails. Muler et al. (2009) introduced BIP-ARMA models with MM-estimates that avoid outlier propagation through bounded residuals, achieving consistency and asymptotic normality with efficiency comparable to MLE under normality~\cite{muler2009robust}. Reisen et al. (2024) proposed an M-Whittle estimator with established consistency property that performs well with outliers and heavy-tailed noise~\cite{reisen2024robust}.

\paragraph{Quantile regression and LAD methods.} Katsouris (2023) provided a comprehensive review of quantile regression models for time series, covering stationary and non-stationary cases, with Bahadur representations for quantile processes and uniform inference in quantile threshold regression~\cite{katsouris2023quantile}. For ARMA models with infinite variance, Peng \& Yao (2003), Ling (2005), and Zhu \& Ling (2015) proposed weighted least absolute deviation estimation (WLADE) that is asymptotically normal and unbiased with standard root-n convergence rate even in the absence of finite variance~\cite{peng2003least,ling2005self,zhu2015model}.

\paragraph{Heavy-tailed specifications.} Modifying classical ARIMA models by replacing Gaussian innovations with heavy-tailed distributions (Student-t, Generalized Error Distribution, $\alpha$-stable distributions) allows better modelling of extreme events. Wong et al. (2009) developed a Student-t mixture autoregressive model with higher flexibility compared to Gaussian MAR, where degrees of freedom are random variables, using an EM algorithm for parameter estimation in a Bayesian framework~\cite{wong2009student}. A recent 2024 study found that skewed GED is most effective for financial time series compared to normal, Student-t, GED, and Skewed Student-t distributions based on goodness-of-fit metrics~\cite{ampadu2024comparative}. Additionally, recent advances in multivariate contexts, such as the matrix-variate shifted generalized asymmetric Laplace distribution introduced by Pereira et al. (2026)~\cite{pereira2026matrix}, highlight the ongoing focus on flexible modelling of asymmetric and heavy-tailed data structures.

\paragraph{Bayesian approaches.} Graves et al. (2015) proposed a systematic approach to Bayesian inference for ARFIMA models with a novel approximate likelihood for efficient inference of parameters in long-memory processes, allowing innovations from a broad class including $\alpha$-stable and t-distributions~\cite{graves2015efficient}. Bayesian methods also integrate uncertainty across all parameters, providing full posterior inference instead of point estimates.

\paragraph{Transformer-based architectures.} Recent research shows that Transformer architectures adapted for time series can implicitly model non-Gaussian characteristics through the attention mechanism~\cite{zhou2024comprehensive}. Hybrid models combining traditional ARIMA with attention-based neural networks demonstrate improved forecasting accuracy for financial time series~\cite{liu2024arima}. Foundation models such as TimeGPT-1 and Chronos represent a new direction for generalised forecasting across diverse datasets without specific training~\cite{godfried2024advances}. However, these models are:
\begin{itemize}
    \item \textbf{Data-hungry}: require thousands of observations for stable training, whereas classical ARIMA works with $N \geq 100$
    \item \textbf{Black-box}: lack of interpretable parameters complicates econometric inference
    \item \textbf{Computationally intensive}: training requires GPUs and takes hours, compared to classical models.
\end{itemize}

It is important to note that deep models are optimised for minimising prediction error, not for efficient \textit{parameter estimation} of the underlying data-generating process.

\paragraph{Comparative analysis of approaches.} Each of the described methods has specific advantages and limitations. Robust methods (M-estimators) provide resilience to outliers and retain asymptotic efficiency under normality, but may lose accuracy under moderate deviations from Gaussianity without extreme outliers. Quantile regression and LAD methods provide a complete picture of the distribution and work without assuming finite variance, but are not optimised for central parameter estimates and mean forecasts. Heavy-tailed specifications explicitly model extreme events and provide interpretable distribution-shape parameters, but require a priori choice of distributional family, which may lead to model misspecification. Bayesian methods integrate parameter uncertainty and allow flexible innovation specifications, but are computationally intensive and require choice of prior distributions. Finally, Transformer-based models achieve high forecasting accuracy and adapt to complex patterns, but do not provide interpretable parametrisation of the data-generating process.

Thus, there is a need for an approach that combines the statistical efficiency of classical methods, interpretability of parameters, and robustness to deviations from normality without requiring specification of a particular distributional family.

\subsection{Polynomial Maximization Method: An Alternative Paradigm}
\label{subsec:pmm_intro}

The polynomial maximization method (PMM), developed by Ukrainian scholar Yu.P. Kunchenko, offers an alternative philosophy for statistical estimation~\cite{kunchenko2002polynomial}. Unlike classical maximum likelihood, which requires full specification of the probability density, PMM is built on \textbf{partial probabilistic parametrisation} via higher-order moments and cumulants.

The method centres on maximising a stochastic polynomial of order $S$ with respect to the model parameters. Rather than maximising the full likelihood, PMM maximises a sample statistic in a neighbourhood of the true parameter values~\cite{kunchenko2002polynomial,kunchenko2006stochastic}.

PMM has been successfully applied to diverse estimation problems:

\begin{itemize}
    \item \textbf{Linear regression.} Zabolotnii et al. (2018) demonstrate PMM2 for linear regression with asymmetric errors, achieving a 15--35\% variance reduction relative to OLS for Gamma and lognormal distributions~\cite{zabolotnii2018polynomial}.

    \item \textbf{Polynomial regression.} Zabolotnii et al. (2021) extend the method to polynomial regression with an exponential-power (generalised Gaussian) distribution, confirming efficiency gains through Monte Carlo and bootstrap experiments~\cite{zabolotnii2021estimating}.

    \item \textbf{Signal processing.} Palahin and Juhár (2016) apply PMM to joint estimation of signal parameters under non-Gaussian noise, showing that nonlinear processing through third- and higher-order cumulants reduces joint-estimation variance compared with conventional techniques~\cite{palahin2016joint}.

    \item \textbf{Metrological measurements.} Warsza and Zabolotnii (2017, 2018) use PMM to estimate measurement parameters with non-Gaussian symmetric and asymmetric data, designing the PMM3 procedure for symmetric distributions~\cite{warsza2017polynomial,zabolotnii2020estimation}.
\end{itemize}

It is worth noting that the PMM method is positioned between the classical method of moments and maximum likelihood. Unlike Hansen's (1982) generalised method of moments (GMM)~\cite{hansen1982gmm}, which minimises a weighted sum of squared deviations between sample and population moments, PMM maximises a stochastic polynomial constructed from higher-order moments or cumulants.

\subsection{Research Gap and Contribution}
\label{subsec:research_gap}

Despite successful applications in regression and signal-processing settings, PMM has not been systematically explored for estimating ARIMA models with non-Gaussian innovations. Several research gaps remain:

\paragraph{Limited development of moment--cumulant methods for time series.} Although higher-order moments and cumulants are widely used in signal processing and spectral analysis, their application to parameter estimation in time-series models is limited. Most non-Gaussian ARIMA approaches rely on robust loss functions or explicit distributional assumptions rather than exploiting moment--cumulant representations.

\paragraph{Insufficient focus on asymmetric innovations.} The literature on non-Gaussian ARIMA models largely concentrates on symmetric heavy-tailed distributions (Student-$t$, GED). Asymmetric distributions---the core target of PMM2---receive far less attention despite their prevalence in financial returns and economic indicators.

\paragraph{Methodological divide across research communities.} Kunchenko's method, although supported by solid theory and successful applications in Eastern Europe, remains little known in Western time-series econometrics. Bridging these communities offers an opportunity to cross-fertilise statistical methodology.

\paragraph{Absence of comparative efficiency studies.} Comparative work typically evaluates MLE, M-estimators, least absolute deviation, or quantile regression. For ARIMA models there is no benchmark assessing the efficiency of moment--cumulant methods such as PMM relative to these alternatives.

This study fills these gaps by adapting PMM2 to ARIMA models, deriving asymptotic properties of the resulting estimator, and delivering comprehensive simulation and empirical evidence. We also provide implementation guidelines, including moment-calibration rules and decision heuristics for deciding when PMM2 should replace classical estimators. Furthermore, to ensure broad accessibility and computational reproducibility, the method has been implemented in the \texttt{EstemPMM} R package, officially published on the Comprehensive R Archive Network (CRAN).

\section{Methodology}
\label{sec:methodology}

This section presents the full methodology for applying the second-order polynomial maximization method (PMM2) to estimate ARIMA models with non-Gaussian innovations. We first outline the ARIMA model and classical estimators, then review the theoretical foundations of PMM, adapt the method to the time-series setting, and derive an implementable algorithm together with asymptotic results.

\subsection{ARIMA Models: Foundations and Classical Estimation}
\label{subsec:arima_basics}

Consider the standard ARIMA$(p,d,q)$ specification for a time series $\{y_t\}_{t=1}^T$, where the $d$-th difference $z_t = \diffop y_t = (1-B)^d y_t$ follows a stationary and invertible ARMA$(p,q)$ model:
\begin{equation}
\label{eq:arma}
\Phi(B) z_t = \Theta(B) \varepsilon_t,
\end{equation}
with $B$ the backshift operator, $\Phi(B) = 1 - \phi_1 B - \cdots - \phi_p B^p$ and $\Theta(B) = 1 + \theta_1 B + \cdots + \theta_q B^q$ the autoregressive and moving-average polynomials, and $\{\varepsilon_t\}$ an i.i.d. innovation sequence satisfying $\E[\varepsilon_t]=0$ and $\Var(\varepsilon_t)=\sigma^2$. Characteristic roots of $\Phi(z)=0$ and $\Theta(z)=0$ lie outside the unit circle, ensuring stationarity and invertibility.

\paragraph{Classical estimators.} Let $\boldsymbol{\theta} = (\phi_1,\ldots,\phi_p,\theta_1,\ldots,\theta_q)^\top$ denote the $k=p+q$ parameters. Two benchmark estimators are used for comparison:

\textit{Ordinary least squares (OLS)} applied to the AR component:
\begin{equation}
\label{eq:ols}
\hat{\boldsymbol{\phi}}_{\text{OLS}} = (\mathbf{X}^\top \mathbf{X})^{-1} \mathbf{X}^\top \mathbf{z},
\end{equation}
where $\mathbf{z} = (z_{p+1},\ldots,z_T)^\top$ and $\mathbf{X}$ collects lagged values.

\textit{Maximum likelihood (MLE)} under $\varepsilon_t \sim \Normal(0,\sigma^2)$:
\begin{equation}
\label{eq:mle}
\hat{\boldsymbol{\theta}}_{\text{MLE}} = \arg\max_{\boldsymbol{\theta}} \mathcal{L}(\boldsymbol{\theta}\mid \mathbf{y}) = \arg\max_{\boldsymbol{\theta}} \left\{ -\frac{T}{2}\log(2\pi\sigma^2) - \frac{1}{2\sigma^2} \sum_{t=1}^T \varepsilon_t^2(\boldsymbol{\theta}) \right\}.
\end{equation}

Under Gaussian innovations, MLE attains asymptotic efficiency: $\sqrt{T}(\hat{\boldsymbol{\theta}}_{\text{MLE}} - \boldsymbol{\theta}_0) \xrightarrow{d} \Normal(0, \mathbf{I}^{-1}(\boldsymbol{\theta}_0))$, where $\mathbf{I}(\boldsymbol{\theta}_0)$ is the Fisher information matrix. With non-Gaussian innovations, however, MLE loses optimality and OLS remains consistent but inefficient, motivating alternative estimators adapted to non-Gaussian distributions.

\subsection{Theoretical Foundations of the Polynomial Maximization Method}
\label{subsec:pmm_theory}

\subsubsection{Stochastic polynomials}

The polynomial maximization method relies on stochastic polynomials, i.e., polynomial functions of random variables whose coefficients depend on the model parameters. The approach was devised for situations where the probabilistic properties of the data depart markedly from the Gaussian law.

\begin{definition}[Stochastic polynomial of order $S$]
For random variables $y_v$, $v = 1,\ldots,N$, and a parameter vector $\mathbf{a}$, the order-$S$ stochastic polynomial is defined as
\begin{equation}
\label{eq:stochastic_polynomial_general}
L_{SN} = \sum_{v=1}^{N} \sum_{i=1}^{S} \phi_i(y_v) \int k_{iv}(a)\, dz - \sum_{i=1}^{S} \sum_{v=1}^{N} \int \Psi_{iv}\, k_{iv}(a)\, dz,
\end{equation}
where $\phi_i(y_v)$ are basis functions, $k_{iv}(a)$ are weights depending on $a$, and $\Psi_{iv} = \E[\phi_i(y_v)]$ are twice-differentiable expectations with respect to $a$.
\end{definition}

\paragraph{Fundamental properties.}

The stochastic polynomial $L_{SN}$ in~\eqref{eq:stochastic_polynomial_general} satisfies two key properties~\cite{kunchenko2002polynomial}:

\begin{enumerate}
    \item For any order $S$, as the sample size $N \to \infty$ the polynomial $L_{SN}$, viewed as a function of $a$, attains its maximum near the true value of $a$.
    \item Across different samples the deviation of the maximiser of $L_{SN}$ from the true $a$ has minimal variance for the chosen order $S$.
\end{enumerate}

Analogously to maximum likelihood, the estimator of $a$ solves
\begin{equation}
\label{eq:pmm_estimation_eq}
\frac{d}{da} L_{SN} \bigg|_{a=\hat{a}} = \sum_{i=1}^{S} \sum_{v=1}^{N} k_{iv} \big[\phi_i(y_v) - \Psi_{iv}\big] \bigg|_{a=\hat{a}} = 0.
\end{equation}

\paragraph{Optimal coefficients and linear system.}

The optimal weights $k_{iv}$ that maximise~\eqref{eq:stochastic_polynomial_general} are obtained by solving
\begin{equation}
\label{eq:optimal_coefficients}
\sum_{j=1}^{S} k_{jv} F_{(i,j)v} = \frac{d}{da} \Psi_{iv}, \quad i=1,\ldots,S,\quad v=1,\ldots,N,
\end{equation}
where $F_{(i,j)v} = \Psi_{(i,j)v} - \Psi_{iv} \Psi_{jv}$ and $\Psi_{(i,j)v} = \E[\phi_i(y_v)\phi_j(y_v)]$.

\paragraph{Vector parameters and multi-parameter estimation.}

To estimate a parameter vector $\thetavec = (a_0,\ldots,a_{Q-1})^\top$, one uses $Q$ polynomials $L_{SN}^{(p)}$, $p=0,\ldots,Q-1$, each of the form~\eqref{eq:stochastic_polynomial_general} for the corresponding component $a_p$.

Each polynomial $L_{SN}^{(p)}$, treated as a function of $a_p$ with the remaining parameters fixed, attains its maximum near the true value of $a_p$ as $N \to \infty$. The resulting estimators solve
\begin{equation}
\label{eq:vector_estimation}
f_{SN}^{(p)}(y_v, x_v) = \sum_{i=1}^{S} \sum_{v=1}^{N} k_{iv}^{(p)} \big[\phi_i(y_v) - \Psi_{iv}\big] \bigg|_{a_p=\hat{a}_p} = 0, \quad p=0,\ldots,Q-1.
\end{equation}

\subsubsection{PMM for asymmetric distributions}

Consider a linear multiple regression model with asymmetric disturbances~\cite{zabolotnii2018polynomial}. Power transformations serve as basis functions whose expectations are moments of the corresponding order. For observations $\{y_v\}_{v=1}^N$
\begin{equation}
\label{eq:multiple_regression_model}
y_v = \boldsymbol{x}_v^\top \boldsymbol{\theta} + \xi_v,\qquad \boldsymbol{x}_v = (1, x_{1,v}, \ldots, x_{Q-1,v})^\top,
\end{equation}
where $\boldsymbol{\theta} = (a_0,\ldots,a_{Q-1})^\top$ and the disturbance satisfies
\[
\E[\xi_v]=0,\quad \E[\xi_v^2]=\mu_2>0,\quad \E[\xi_v^3]=\mu_3\neq 0,\quad \E[\xi_v^4]=\mu_4<\infty.
\]
Let $\eta_v(\boldsymbol{\theta}) = \boldsymbol{x}_v^\top \boldsymbol{\theta}$ and $\mathbf{X} = [\boldsymbol{x}_1,\ldots,\boldsymbol{x}_N]^\top$.

\paragraph{PMM1: linear polynomial and equivalence to OLS.}

For $S=1$ choose $\phi_1(y_v) = y_v$, yielding $\Psi_{1v} = \E[y_v] = \eta_v(\boldsymbol{\theta})$. The covariance $F_{(1,1)v} = \mu_2$ is constant, and from~\eqref{eq:optimal_coefficients} we obtain $k_{1,v}^{(p)} = x_{p,v}/\mu_2$ with $x_{0,v} \equiv 1$. The first-order conditions become
\begin{equation}
\label{eq:pmm1_system}
\sum_{v=1}^{N} x_{p,v} \left[ y_v - \eta_v(\boldsymbol{\theta}) \right] = 0,\qquad p = 0,\ldots,Q-1,
\end{equation}
which coincide with the normal equations $\mathbf{X}^\top \mathbf{X}\,\boldsymbol{\theta} = \mathbf{X}^\top \boldsymbol{y}$. Hence PMM1 reproduces the OLS estimator
\begin{equation}
\label{eq:pmm1_solution}
\hat{\boldsymbol{\theta}}_{\mathrm{PMM1}} = (\mathbf{X}^\top \mathbf{X})^{-1} \mathbf{X}^\top \boldsymbol{y},
\end{equation}
optimal only when errors are Gaussian.

\paragraph{PMM2: second-order stochastic polynomial.}

To accommodate asymmetry we form a stochastic polynomial with basis functions
\begin{align}
\phi_1(y_v) &= y_v, & \Psi_{1v} &= \eta_v(\boldsymbol{\theta}), \\
\phi_2(y_v) &= y_v^2, & \Psi_{2v} &= \eta_v^2(\boldsymbol{\theta}) + \mu_2.
\end{align}
The matrices $F_{(i,j)v} = \Psi_{(i,j)v} - \Psi_{iv}\Psi_{jv}$ depend on central moments up to order four, leading to the optimal coefficients
\begin{equation}
\label{eq:pmm2_k_general}
k_{1,v}^{(p)} = \frac{\mu_4 - \mu_2^2 + 2\mu_3 \eta_v(\boldsymbol{\theta})}{\Delta}\, x_{p,v}, \qquad
k_{2,v}^{(p)} = -\frac{\mu_3}{\Delta}\, x_{p,v},
\end{equation}
where
\begin{equation}
\label{eq:delta_definition}
\Delta = \mu_2\big(\mu_4 - \mu_2^2\big) - \mu_3^2 > 0.
\end{equation}
The estimating equations become
\begin{equation}
\label{eq:pmm2_system_general}
g_p(\boldsymbol{\theta}) = \sum_{v=1}^{N} x_{p,v} \left\{ \frac{\mu_4 - \mu_2^2 + 2\mu_3 \eta_v(\boldsymbol{\theta})}{\Delta} \big[y_v - \eta_v(\boldsymbol{\theta})\big] - \frac{\mu_3}{\Delta} \big[y_v^2 - \eta_v^2(\boldsymbol{\theta}) - \mu_2\big] \right\} = 0,
\end{equation}
for $p = 0,\ldots,Q-1$, which collapses to the OLS system when $\mu_3 = 0$.

Multiplying~\eqref{eq:pmm2_system_general} by $\Delta$ and grouping terms by powers of $\eta_v(\boldsymbol{\theta})$ yields the quadratic system
\begin{equation}
\label{eq:pmm2_quadratic_form}
\sum_{v=1}^{N} x_{p,v} \left[ A_2 \eta_v^2(\boldsymbol{\theta}) + B_{2,v}\,\eta_v(\boldsymbol{\theta}) + C_{2,v} \right] = 0,\quad p = 0,\ldots,Q-1,
\end{equation}
with coefficients
\begin{equation}
\label{eq:pmm2_coeffs}
A_2 = \mu_3,\qquad B_{2,v} = (\mu_4 - \mu_2^2) - 2\mu_3 y_v,\qquad C_{2,v} = \mu_3 y_v^2 - y_v (\mu_4 - \mu_2^2) - \mu_2 \mu_3.
\end{equation}

\paragraph{Matrix form and Newton--Raphson step.}

Define $\boldsymbol{g}(\boldsymbol{\theta}) = (g_0(\boldsymbol{\theta}),\ldots,g_{Q-1}(\boldsymbol{\theta}))^\top$ and
\begin{equation}
\label{eq:lambda_definition}
\lambda_v(\boldsymbol{\theta}) = \frac{2\mu_3 [y_v - \eta_v(\boldsymbol{\theta})] - (\mu_4 - \mu_2^2)}{\Delta}.
\end{equation}
The Jacobian becomes
\begin{equation}
\label{eq:pmm2_jacobian}
\mathbf{J}_{\mathrm{PMM2}}(\boldsymbol{\theta}) = \sum_{v=1}^{N} \lambda_v(\boldsymbol{\theta})\, \boldsymbol{x}_v \boldsymbol{x}_v^\top,
\end{equation}
and the Newton--Raphson step updates
\begin{equation}
\label{eq:pmm_newton}
\boldsymbol{\theta}^{(m+1)} = \boldsymbol{\theta}^{(m)} - \mathbf{J}_{\mathrm{PMM2}}^{-1}\big(\boldsymbol{\theta}^{(m)}\big)\, \boldsymbol{g}\big(\boldsymbol{\theta}^{(m)}\big),
\end{equation}
with the OLS estimator~\eqref{eq:pmm1_solution} serving as a convenient initial value $\boldsymbol{\theta}^{(0)}$.

\paragraph{Adaptive procedure.}

In practice the moments $\mu_2$, $\mu_3$, and $\mu_4$ are unknown and are replaced by sample estimates based on current residuals. A standard iterative routine proceeds as follows:
\begin{enumerate}
    \item \textbf{Step 1:} Compute the OLS estimator $\hat{\boldsymbol{\theta}}^{\mathrm{OLS}}$ and residuals $\hat{\xi}_v^{(0)} = y_v - \boldsymbol{x}_v^\top \hat{\boldsymbol{\theta}}^{\mathrm{OLS}}$.
    \item \textbf{Step 2:} For iteration $m$ update the moments
    \[
    \hat{\mu}_r^{(m)} = \frac{1}{N} \sum_{v=1}^{N} \big(\hat{\xi}_v^{(m)}\big)^r,\quad r \in \{2,3,4\},
    \]
    and compute the excess kurtosis $\hat{\gamma}_4^{(m)} = \hat{\mu}_4^{(m)} / (\hat{\mu}_2^{(m)})^2 - 3$ and skewness $\hat{\gamma}_3^{(m)} = \hat{\mu}_3^{(m)} / (\hat{\mu}_2^{(m)})^{3/2}$.
    \item \textbf{Step 3:} If $|\hat{\gamma}_3^{(m)}| < 0.1$, retain the OLS estimator. Otherwise solve~\eqref{eq:pmm2_system_general} via~\eqref{eq:pmm_newton} using $\hat{\mu}_r^{(m)}$ to obtain $\boldsymbol{\theta}^{(m+1)}$.
    \item \textbf{Step 4:} Update residuals $\hat{\xi}_v^{(m+1)} = y_v - \boldsymbol{x}_v^\top \boldsymbol{\theta}^{(m+1)}$ and iterate steps 2--4 until convergence.
\end{enumerate}

\subsubsection{Asymptotic variances and efficiency of PMM estimators}

\paragraph{Information matrix.}

Analytical expressions for the variances of PMM estimators rely on the matrix of acquired information for order-$S$ stochastic polynomials:
\begin{equation}
\label{eq:fisher_information_matrix}
J_{SN}^{(p,q)} = \sum_{v=1}^{N} \sum_{i=1}^{S} \sum_{j=1}^{S} k_{iv}^{(p)} k_{jv}^{(q)} F_{(i,j)v} = \sum_{v=1}^{N} \sum_{i=1}^{S} k_{i,v}^{(p)} \frac{\partial}{\partial a_q} \Psi_{iv}, \quad p,q=0,\ldots,Q-1.
\end{equation}

This quantity is conceptually analogous to Fisher information. In the asymptotic regime ($N \to \infty$) the variance matrix of PMM estimators equals the inverse of~\eqref{eq:fisher_information_matrix}:
\begin{equation}
\label{eq:variance_matrix}
\mathbf{V}_{\text{PMM}S}(\boldsymbol{\theta}) = \big[\mathbf{J}_S(\boldsymbol{\theta})\big]^{-1}.
\end{equation}

\paragraph{Convergence to the Rao--Cramér bound.}

A key feature of PMM is that as $S \to \infty$ the estimator approaches the minimum-variance unbiased estimator and attains the Rao--Cramér bound. For $S=2$---the focus for asymmetric innovations---the estimator already exploits skewness and kurtosis, delivering notable efficiency gains over OLS without requiring full density specification.

For the scalar case ($Q=1$) the ratio of asymptotic variances between PMM2 and OLS simplifies to
\begin{equation}
\label{eq:re_pmm2_ols}
\RE_{\text{PMM2/OLS}} = \frac{4 + 2\gamma_4}{4 + 2\gamma_4 - \gamma_3^2},
\end{equation}
where $\gamma_3$ and $\gamma_4$ denote the standardised skewness and excess kurtosis of the innovations. This expression highlights the quadratic impact of skewness on efficiency gains.

\subsection{PMM2 for ARIMA Models: Method Adaptation}
\label{subsec:pmm2_arima}

\subsubsection{Motivation and approximation principle}

Adapting PMM2 to ARIMA processes hinges on pre-stationarisation: differencing of order $d$ enables the constructions from Section~\ref{subsec:pmm_theory} to operate on the stationary series $z_t$. After this transformation the baseline method reliably recovers innovations and the PMM2 correction refines estimates under asymmetry. The approach preserves PMM2's sensitivity to higher moments while avoiding complex recursions for pseudo-regressors. Instead we implement a simple two-stage routine:

\begin{enumerate}
    \item \textbf{Baseline step.} Estimate ARIMA$(p,d,q)$ via a standard method (CSS or ML) and treat the resulting residuals as empirical innovations.
    \item \textbf{PMM2 correction.} Fix the design matrix formed at the baseline step and apply the second-order polynomial adjustment to incorporate innovation skewness and kurtosis.
\end{enumerate}

\subsubsection{Pseudo-regressor construction}

Let
\begin{equation}
\label{eq:differenced_series}
z_t = \diffop y_t,\qquad t=d+1,\ldots,T,\qquad n = T-d,
\end{equation}
denote the stationarised series. After estimating ARIMA$(p,d,q)$ in the first stage we obtain residuals $\widehat{\varepsilon}_t^{\text{CSS}}$. For the effective length $n_\text{eff} = n - m$ with $m = \max(p,q)$ define the regressors
\begin{equation}
\label{eq:design_row}
\mathbf{x}_t = \big(z_{t-1},\ldots,z_{t-p},\widehat{\varepsilon}_{t-1}^{\text{CSS}},\ldots,\widehat{\varepsilon}_{t-q}^{\text{CSS}}\big)^\top,\qquad t = m+1,\ldots,n.
\end{equation}
If an intercept is present, append a column of ones. The resulting matrix $\mathbf{X} = (\mathbf{x}_{m+1},\ldots,\mathbf{x}_{n})^\top$ is parameter-free, reducing subsequent optimisation to the problem in Section~\ref{subsec:pmm_theory}.

\subsubsection{Moment calibration and stochastic polynomial}

Using residuals from the baseline step compute the central moments
\begin{equation}
\label{eq:css_moments}
\hat{\mu}_k = \frac{1}{n_\text{eff}} \sum_{t=m+1}^{n} \big(\widehat{\varepsilon}_t^{\text{CSS}} - \bar{\varepsilon}\big)^k,\qquad k=2,3,4,
\end{equation}
where $\bar{\varepsilon}$ is the sample mean. Analogously to the basic PMM2 we define
\begin{equation}
\label{eq:delta_hat}
\hat{\Delta} = \hat{\mu}_2(\hat{\mu}_4 - \hat{\mu}_2^2) - \hat{\mu}_3^2.
\end{equation}
With $\boldsymbol{\theta} = (\phi_1,\ldots,\phi_p,\theta_1,\ldots,\theta_q)^\top$ and $\eta_t(\boldsymbol{\theta}) = \mathbf{x}_t^\top \boldsymbol{\theta}$, the stochastic polynomial becomes
\begin{equation}
\label{eq:simplified_score}
g_j(\boldsymbol{\theta}) = \sum_{t=m+1}^{n} x_{j,t}
\left[
\frac{\hat{\mu}_4 - \hat{\mu}_2^2 + 2\hat{\mu}_3 \eta_t(\boldsymbol{\theta})}{\hat{\Delta}}\big(z_t - \eta_t(\boldsymbol{\theta})\big)
-\frac{\hat{\mu}_3}{\hat{\Delta}}\Big(z_t^2 - \eta_t^2(\boldsymbol{\theta}) - \hat{\mu}_2\Big)
\right]=0,
\end{equation}
where $x_{j,t}$ denotes the $j$th component of $\mathbf{x}_t$. System~\eqref{eq:simplified_score} mirrors~\eqref{eq:pmm2_system_general} for a fixed design matrix. In the symmetric limiting case ($\hat{\mu}_3 = 0$) it reduces to weighted least squares with weight $\hat{\mu}_2^{-1}$.

\subsection{PMM2 Estimation Algorithm for ARIMA}
\label{subsec:algorithm}

\begin{algorithm}[H]
\caption{Simplified PMM2 estimator for ARIMA$(p,d,q)$}
\label{alg:pmm2_arima}
\begin{algorithmic}[1]
\REQUIRE Time series $\{y_t\}_{t=1}^T$, orders $(p,d,q)$, choice of initial estimator (CSS or ML)
\ENSURE Parameter vector $\hat{\boldsymbol{\theta}}_{\text{PMM2}}$, moment estimates $\hat{\mu}_2,\hat{\mu}_3,\hat{\mu}_4$

\STATE \textbf{Differencing.} Compute $z_t = \diffop y_t$ using~\eqref{eq:differenced_series}.

\STATE \textbf{Baseline estimation.} Obtain $\hat{\phi}_j^{\text{CSS}}$, $\hat{\theta}_k^{\text{CSS}}$, and residuals $\widehat{\varepsilon}_t^{\text{CSS}}$ via the chosen standard method.

\STATE \textbf{Design matrix.} Form $\mathbf{X}$ from rows~\eqref{eq:design_row} and the response vector $\mathbf{z} = (z_{m+1},\ldots,z_n)^\top$.

\STATE \textbf{Moment evaluation.} Compute $\hat{\mu}_2,\hat{\mu}_3,\hat{\mu}_4$ using~\eqref{eq:css_moments} and $\hat{\Delta}$ from~\eqref{eq:delta_hat}.

\STATE \textbf{Initialization.} Set $\boldsymbol{\theta}^{(0)} = (\hat{\phi}_1^{\text{CSS}},\ldots,\hat{\phi}_p^{\text{CSS}},\hat{\theta}_1^{\text{CSS}},\ldots,\hat{\theta}_q^{\text{CSS}})^\top$.

\STATE \textbf{Polynomial optimization.} Apply the iterative PMM2 solver for the fixed design $\mathbf{X}$ (see~\eqref{eq:simplified_score}), stopping when $\|\boldsymbol{\theta}^{(k)} - \boldsymbol{\theta}^{(k-1)}\|$ and the norm of the moment conditions fall below a predefined tolerance.

\STATE \textbf{Residual reconstruction.} Recompute innovations by passing the final parameters through the ARIMA model; retain them for diagnostic checks.

\RETURN $\hat{\boldsymbol{\theta}}_{\text{PMM2}} = \boldsymbol{\theta}^{(k_\star)}$ together with $\hat{\mu}_2,\hat{\mu}_3,\hat{\mu}_4$.
\end{algorithmic}
\end{algorithm}

\paragraph{Implementation notes.}

\begin{itemize}
    \item The algorithm uses a single design matrix obtained from the first-step estimates. Computational cost is therefore dominated by $O(n_\text{eff} k)$ matrix products and solving a small $k \times k$ system at each PMM2 iteration.
    \item Stationarity and invertibility are enforced by projecting coefficients onto the admissible region: if characteristic roots fall inside the unit circle, coefficients are rescaled to the boundary.
\end{itemize}

\subsection{Asymptotic Properties of PMM2 for ARIMA}
\label{subsec:asymptotic_theory}

For analysis it is convenient to rewrite system~\eqref{eq:simplified_score} as averaged moment conditions. Let $\mathbf{x}_t^0$ denote the "ideal" regressors constructed from the true innovations $\varepsilon_t$, and let $\widehat{\mathbf{x}}_t$ be their empirical counterparts from~\eqref{eq:design_row}. Define
\[
\boldsymbol{\psi}_t(\boldsymbol{\theta}) = \widehat{\mathbf{x}}_t\, s_t(\boldsymbol{\theta}),
\]
where $s_t(\boldsymbol{\theta})$ is the bracketed expression in~\eqref{eq:simplified_score}. The estimator solves
\[
\mathbf{g}_{n_\text{eff}}(\boldsymbol{\theta}) =
\frac{1}{n_\text{eff}}\sum_{t=m+1}^{n} \boldsymbol{\psi}_t(\boldsymbol{\theta}) = \mathbf{0}.
\]

\subsubsection{Impact of generated regressors}

The two-step PMM2 procedure uses the first-step residuals $\widehat{\varepsilon}_t^{\text{CSS}}$ as regressors in the second step, giving rise to the classical issue of \textit{generated regressors} studied by Pagan (1984)~\cite{pagan1984econometric} and Newey (1984)~\cite{newey1984method}. The lemma below states conditions under which the two-step scheme leaves the asymptotic covariance matrix unchanged.

\begin{lemma}[Asymptotic equivalence with true regressors]
\label{lem:generated_regressors}
Suppose the following conditions hold:
\begin{enumerate}
    \item The initial estimator $\hat{\boldsymbol{\theta}}^{\text{CSS}}$ is $\sqrt{n}$-consistent: $\sqrt{n}(\hat{\boldsymbol{\theta}}^{\text{CSS}} - \boldsymbol{\theta}_0) = O_p(1)$.
    \item Residuals satisfy $\sup_t |\widehat{\varepsilon}_t^{\text{CSS}} - \varepsilon_t| = O_p(n^{-1/2})$.
    \item The stochastic-polynomial functions are smooth (Lipschitz continuous) in the regressors.
\end{enumerate}
Then the asymptotic distribution of the PMM2 estimator based on $\widehat{\varepsilon}_t^{\text{CSS}}$ coincides with that obtained using the true innovations:
\[
\sqrt{n}\big(\hat{\boldsymbol{\theta}}_{\text{PMM2}}(\widehat{\varepsilon}^{\text{CSS}}) - \boldsymbol{\theta}_0\big) - \sqrt{n}\big(\hat{\boldsymbol{\theta}}_{\text{PMM2}}(\varepsilon) - \boldsymbol{\theta}_0\big) = o_p(1).
\]
\end{lemma}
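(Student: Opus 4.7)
The plan is to invoke the standard generated-regressors framework of Pagan and Newey while showing that in the PMM2 setting the usual Newey correction vanishes, so no distributional adjustment is needed. I would start from the decomposition
\[
\mathbf{g}_{n_{\text{eff}}}(\thetavec_0 \mid \widehat{\mathbf{X}}) = \mathbf{g}_{n_{\text{eff}}}(\thetavec_0 \mid \mathbf{X}^0) + \frac{1}{n_{\text{eff}}}\sum_{t=m+1}^{n}\bigl(\widehat{\mathbf{x}}_t - \mathbf{x}_t^0\bigr)\, s_t(\thetavec_0),
\]
where $s_t(\thetavec)$ denotes the scalar bracket in~\eqref{eq:simplified_score}, $\mathbf{x}_t^0$ is built from the true innovations, and $\widehat{\mathbf{x}}_t$ from the CSS residuals. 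Assumption~(2) gives $\sup_t\|\widehat{\mathbf{x}}_t - \mathbf{x}_t^0\| = O_p(n^{-1/2})$, and combined with Assumption~(1) admits a first-order expansion $\widehat{\mathbf{x}}_t - \mathbf{x}_t^0 = \mathbf{A}_t(\widehat{\thetavec}^{\mathrm{CSS}} - \thetavec_0) + o_p(n^{-1/2})$ uniformly in $t$, where $\mathbf{A}_t$ is obtained by differentiating the CSS recursion for $\widehat{\varepsilon}_t^{\mathrm{CSS}}$ with respect to the AR--MA parameters.

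The pivotal algebraic step is to evaluate $s_t(\thetavec_0)$ at the truth. Substituting $z_t = \eta_t(\thetavec_0) + \varepsilon_t$ into~\eqref{eq:simplified_score}, the cross-terms proportional to $\hat\mu_3\,\eta_t(\thetavec_0)\,\varepsilon_t$ cancel and leave
\[
s_t(\thetavec_0) = \frac{\hat\mu_4 - \hat\mu_2^2}{\hat\Delta}\,\varepsilon_t - \frac{\hat\mu_3}{\hat\Delta}\bigl(\varepsilon_t^2 - \hat\mu_2\bigr),
\]
which depends only on $\varepsilon_t$. Under the i.i.d.\ innovation hypothesis $s_t(\thetavec_0)$ is therefore a mean-zero martingale difference with respect to $\mathcal{F}_{t-1} = \sigma(\varepsilon_s : s \le t-1)$, with $\E[s_t(\thetavec_0)^2] < \infty$ by $\E[\varepsilon_t^4]<\infty$. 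Crucially, the sensitivity matrix $\mathbf{A}_t$ is $\mathcal{F}_{t-1}$-measurable, since the CSS recursion expresses $\partial \widehat{\varepsilon}_t^{\mathrm{CSS}}/\partial \thetavec$ in terms of lagged innovations and lagged parameter sensitivities only. Consequently $\sum_t \mathbf{A}_t\, s_t(\thetavec_0)$ is a martingale sum of order $O_p(n^{1/2})$, which combined with $\widehat{\thetavec}^{\mathrm{CSS}} - \thetavec_0 = O_p(n^{-1/2})$ renders the generated-regressor contribution to $\sqrt{n}\,\mathbf{g}_{n_{\text{eff}}}(\thetavec_0 \mid \widehat{\mathbf{X}})$ equal to $o_p(1)$.

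With the remainder negligible, a standard mean-value expansion around $\thetavec_0$ using the shared Jacobian $\mathbf{J}_{\mathrm{PMM2}}(\thetavec_0)$, whose probability limit is insensitive to $O_p(n^{-1/2})$ design perturbations by the Lipschitz condition~(3) and a uniform law of large numbers, yields
\[
\sqrt{n}\bigl(\widehat{\thetavec}_{\mathrm{PMM2}} - \thetavec_0\bigr) = -\mathbf{J}_{\mathrm{PMM2}}^{-1}(\thetavec_0)\,\frac{1}{\sqrt{n}}\sum_{t=m+1}^{n} \mathbf{x}_t^0\, s_t(\thetavec_0) + o_p(1),
\]
identical for both designs, which proves the claim. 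The main obstacle is the rigorous verification that $\mathbf{A}_t$ is predictable with respect to $\mathcal{F}_{t-1}$: this demands unrolling the CSS innovation recursion and exploiting invertibility of $\Theta(B)$ to secure geometric decay of the coefficients in $\partial\widehat{\varepsilon}_t^{\mathrm{CSS}}/\partial\thetavec$, without which the martingale CLT argument in the negligibility step breaks down; a secondary but routine point is the uniform integrability needed to justify convergence of the Jacobian, which follows from $\hat\Delta \xrightarrow{p} \Delta > 0$ together with the fourth-moment assumption on the innovations.
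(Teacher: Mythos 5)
Your proposal is correct and follows the same overall skeleton as the paper's sketch (a first-order expansion of the moment conditions in the perturbation $\widehat{\varepsilon}^{\text{CSS}}-\varepsilon$, negligibility of the cross term, appeal to Pagan/Newey), but it supplies the one step that actually carries the result and that the paper leaves unjustified. The paper's sketch simply asserts that the correction term $\mathbf{H}_n(\thetavec)(\widehat{\varepsilon}^{\text{CSS}}-\varepsilon)$ is $O_p(n^{-1/2})\cdot O_p(n^{-1/2})=O_p(n^{-1})$; but a generic derivative matrix $\mathbf{H}_n$ would only be $O_p(1)$ after averaging, which would leave an $O_p(1)$ contribution after $\sqrt{n}$-scaling and force the full Newey variance correction. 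Your argument explains why the first factor is in fact $O_p(n^{-1/2})$: the score at the truth collapses to $s_t(\thetavec_0)=\tfrac{\mu_4-\mu_2^2}{\Delta}\varepsilon_t-\tfrac{\mu_3}{\Delta}(\varepsilon_t^2-\mu_2)$, a mean-zero function of $\varepsilon_t$ alone, while the design sensitivities $\mathbf{A}_t$ are built from lagged quantities and hence predictable, so $\tfrac{1}{n}\sum_t\mathbf{A}_t s_t(\thetavec_0)$ is a normalized martingale sum of order $O_p(n^{-1/2})$. This orthogonality is exactly the condition under which Pagan's and Newey's results say no adjustment is needed, and making it explicit is a genuine improvement over the published sketch. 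One small caveat: as written, your $s_t(\thetavec_0)$ contains the sample moments $\hat\mu_2,\hat\mu_3,\hat\mu_4$, which depend on the whole sample and so are not $\mathcal{F}_{t-1}$-measurable; the martingale-difference claim should be stated for the version with population moments, with the substitution $\hat\mu_k\xrightarrow{p}\mu_k$ handled by a separate (routine) negligibility step. With that repair, and the geometric-decay/uniformity points you already flag, the argument is complete.
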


\begin{proof}[Sketch]
Apply a first-order expansion with respect to $(\widehat{\varepsilon}^{\text{CSS}} - \varepsilon)$:
\[
\mathbf{g}_n(\boldsymbol{\theta}, \widehat{\varepsilon}^{\text{CSS}}) = \mathbf{g}_n(\boldsymbol{\theta}, \varepsilon) + \mathbf{H}_n(\boldsymbol{\theta})(\widehat{\varepsilon}^{\text{CSS}} - \varepsilon) + o_p(n^{-1/2}),
\]
where $\mathbf{H}_n$ collects derivatives with respect to residuals. Under conditions (1)--(2) the second term is $O_p(n^{-1/2}) \cdot O_p(n^{-1/2}) = O_p(n^{-1})$, which is asymptotically negligible. The full argument parallels Pagan (1984, Theorem~1) and Newey (1984, Proposition~1).
\end{proof}

\textbf{Corollary.} Under Lemma~\ref{lem:generated_regressors}, the asymptotic covariance matrix of the PMM2 estimator is computed via the standard sandwich formula without correcting for the first-step estimator, justifying the classical standard errors reported in Section~\ref{subsec:algorithm}.

\subsubsection{Consistency}

\begin{theorem}[Consistency of the simplified PMM2 estimator]
\label{thm:pmm2_consistency}
Assume:
\begin{enumerate}
    \item The ARIMA$(p,d,q)$ model is correctly specified; innovations $\varepsilon_t$ are stationary, ergodic, and possess finite moments up to order four.
    \item The initial CSS/ML estimator is consistent: $\hat{\boldsymbol{\theta}}^{\text{CSS}} \xrightarrow{p} \boldsymbol{\theta}_0$.
    \item The sequence $\{\widehat{\varepsilon}_t^{\text{CSS}}\}$ converges in mean square to the true innovations: $\frac{1}{n_\text{eff}}\sum (\widehat{\varepsilon}_t^{\text{CSS}} - \varepsilon_t)^2 \xrightarrow{p} 0$.
    \item The matrix $E[\mathbf{x}_t^0 (\mathbf{x}_t^0)^\top]$ is nonsingular.
\end{enumerate}
Then $\hat{\boldsymbol{\theta}}_{\text{PMM2}}$ is consistent:
\[
\hat{\boldsymbol{\theta}}_{\text{PMM2}} \xrightarrow{p} \boldsymbol{\theta}_0.
\]
\end{theorem}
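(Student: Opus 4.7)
My plan is to cast the PMM2 estimator as an M-estimator defined by moment conditions and apply the standard consistency framework: identification plus uniform convergence on a compact parameter set, together with Lemma~\ref{lem:generated_regressors} to absorb the discrepancy between the true innovations and their CSS proxies. Throughout the argument I would assume that $\boldsymbol{\theta}_0$ lies in the interior of a compact parameter space $\Theta$ respecting stationarity and invertibility, which is guaranteed by the projection step described in Section~\ref{subsec:algorithm}.

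First I would replace the sample equations $\mathbf{g}_{n_{\text{eff}}}(\boldsymbol{\theta})$ by their idealised counterpart $\mathbf{g}_{n_{\text{eff}}}^{0}(\boldsymbol{\theta})$, formed from the ideal regressors $\mathbf{x}_t^0$ built out of the true innovations. By condition (3) and a Lipschitz estimate on $s_t(\cdot)$ as a function of the regressors (which is finite because $s_t$ is polynomial of degree two in $z_t$ and the regressors, and moments up to order four are finite by (1)), one shows $\sup_{\boldsymbol{\theta}\in\Theta}\|\mathbf{g}_{n_{\text{eff}}}(\boldsymbol{\theta}) - \mathbf{g}_{n_{\text{eff}}}^{0}(\boldsymbol{\theta})\| = o_p(1)$; the argument is the uniform version of the expansion used in Lemma~\ref{lem:generated_regressors}, combined with Cauchy--Schwarz applied to the mean-square bound on $\widehat{\varepsilon}_t^{\text{CSS}} - \varepsilon_t$. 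A similar reasoning replaces the sample moment estimates $\hat{\mu}_2,\hat{\mu}_3,\hat{\mu}_4$ by their population counterparts via the ergodic theorem applied to the stationary sequence $\{\varepsilon_t^k\}$ for $k\leq 4$.

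Next I would establish the uniform law of large numbers $\sup_{\boldsymbol{\theta}\in\Theta}\|\mathbf{g}_{n_{\text{eff}}}^{0}(\boldsymbol{\theta}) - \mathbf{G}(\boldsymbol{\theta})\| \xrightarrow{p} 0$, where $\mathbf{G}(\boldsymbol{\theta}) = E[\boldsymbol{\psi}_t^{0}(\boldsymbol{\theta})]$. This follows from the ergodic theorem of Hannan applied to each fixed $\boldsymbol{\theta}$ together with a stochastic equicontinuity argument: $\boldsymbol{\psi}_t^{0}(\boldsymbol{\theta})$ is a polynomial of degree three in $\boldsymbol{\theta}$ with coefficients whose fourth moments are finite, so the envelope condition required by a Jennrich-type uniform LLN is satisfied on the compact set $\Theta$. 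Then I would verify the identification condition: substituting $\boldsymbol{\theta}=\boldsymbol{\theta}_0$ yields $z_t - \eta_t(\boldsymbol{\theta}_0) = \varepsilon_t$, $z_t^2 - \eta_t^2(\boldsymbol{\theta}_0) - \mu_2 = 2\eta_t(\boldsymbol{\theta}_0)\varepsilon_t + \varepsilon_t^2 - \mu_2$; using $E[\varepsilon_t]=0$, $E[\varepsilon_t^2]=\mu_2$, $E[\varepsilon_t^3]=\mu_3$ and the fact that $\eta_t(\boldsymbol{\theta}_0)$ is $\mathcal{F}_{t-1}$-measurable so that $E[\mathbf{x}_t^0\eta_t(\boldsymbol{\theta}_0)\varepsilon_t]=0$, the expectations collapse and one verifies $\mathbf{G}(\boldsymbol{\theta}_0)=\mathbf{0}$. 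Uniqueness of this zero follows from condition (4) combined with the positivity $\hat\Delta>0$: the Jacobian $\partial \mathbf{G}/\partial \boldsymbol{\theta}$ at $\boldsymbol{\theta}_0$ equals $-\mu_2^{-1}\bigl[1+\mu_3^2/\Delta\bigr]E[\mathbf{x}_t^0(\mathbf{x}_t^0)^\top]$ up to a positive scalar factor (after reducing with $\varepsilon_t$-orthogonality), hence is negative definite, so $\mathbf{G}$ is locally one-to-one and a standard monotonicity/convexity argument on the quadratic form in $\eta_t(\boldsymbol{\theta})$ eliminates other roots in $\Theta$.

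Finally, I would invoke Theorem~2.1 of Newey--McFadden (1994): identification, compactness of $\Theta$, continuity of $\mathbf{G}$, and the uniform convergence just established jointly imply $\hat{\boldsymbol{\theta}}_{\text{PMM2}} \xrightarrow{p} \boldsymbol{\theta}_0$. The main obstacle I anticipate is the identification step, because $\mathbf{G}(\boldsymbol{\theta})$ is a nonlinear (quadratic) function of $\boldsymbol{\theta}$ and could in principle admit spurious zeros; establishing global uniqueness on $\Theta$ requires careful use of the martingale-difference structure of $\varepsilon_t$ and of the sign of the leading coefficient $\mu_3$ to rule out the symmetric root suggested by the quadratic form. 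Secondary technical care is needed to show that the approximation of $\mathbf{g}_{n_{\text{eff}}}$ by $\mathbf{g}_{n_{\text{eff}}}^{0}$ holds \emph{uniformly} in $\boldsymbol{\theta}$, not only at the truth, which is where condition (2) of Lemma~\ref{lem:generated_regressors} plays a decisive role.
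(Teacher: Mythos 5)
Your proposal follows essentially the same route as the paper's own (sketch) proof: convergence of the generated regressors and plug-in moments to their ideal counterparts, uniform convergence of the moment conditions to $\mathbf{G}(\boldsymbol{\theta})=E[\mathbf{x}_t^0 s_t^0(\boldsymbol{\theta})]$, and standard Z-estimator arguments à la Newey--McFadden, so the approaches coincide. You in fact supply more detail than the paper does — notably the explicit verification that $\mathbf{G}(\boldsymbol{\theta}_0)=\mathbf{0}$ via the martingale-difference structure and the correct Jacobian $-\frac{\mu_4-\mu_2^2}{\Delta}E[\mathbf{x}_t^0(\mathbf{x}_t^0)^\top]$ — and you rightly flag that global uniqueness of the root of the quadratic moment function is the genuine weak point, which the paper's proof silently subsumes under condition (4) even though that condition only delivers local identification.
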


\begin{proof}[Sketch]
Conditions (2)--(3) imply $\widehat{\mathbf{x}}_t \xrightarrow{p} \mathbf{x}_t^0$ and $\hat{\mu}_k \xrightarrow{p} \mu_k$ for $k=2,3,4$. By continuity of $s_t(\boldsymbol{\theta})$ in these arguments we obtain uniform convergence $\mathbf{g}_{n_\text{eff}}(\boldsymbol{\theta}) \to \mathbf{g}(\boldsymbol{\theta}) = E[\mathbf{x}_t^0 s_t^0(\boldsymbol{\theta})]$. The unique root of $\mathbf{g}(\boldsymbol{\theta}) = \mathbf{0}$ (condition 4) and standard Z-estimator arguments (Newey and McFadden, 1994) complete the proof.
\end{proof}

\subsubsection{Asymptotic normality}

\begin{theorem}[Asymptotic distribution]
\label{thm:asymptotic_normality}
Under Theorem~\ref{thm:pmm2_consistency} and assuming additionally that $\{\varepsilon_t\}$ satisfies a central limit theorem for square-integrable functions,
\[
\sqrt{n_\text{eff}}\big(\hat{\boldsymbol{\theta}}_{\text{PMM2}} - \boldsymbol{\theta}_0\big) \xrightarrow{d} \Normal(0,\boldsymbol{\Sigma}_{\text{PMM2}}),
\]
where
\begin{align*}
\mathbf{A} &= E\!\left[ \frac{\partial \boldsymbol{\psi}_t^0(\boldsymbol{\theta}_0)}{\partial \boldsymbol{\theta}^\top} \right],
&
\mathbf{B} &= E\!\left[ \boldsymbol{\psi}_t^0(\boldsymbol{\theta}_0)\, \boldsymbol{\psi}_t^0(\boldsymbol{\theta}_0)^\top \right],\\
\boldsymbol{\Sigma}_{\text{PMM2}} &= \mathbf{A}^{-1}\mathbf{B}(\mathbf{A}^{-1})^\top,
\end{align*}
and $\boldsymbol{\psi}_t^0(\cdot)$ uses the true innovations.
\end{theorem}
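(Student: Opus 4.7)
The plan is to follow the classical Z-estimator template: expand the first-order condition $\mathbf{g}_{n_\text{eff}}(\hat{\thetavec}_{\text{PMM2}})=\mathbf{0}$ around $\thetavec_0$, invert the Jacobian, and then analyse the centred score and the Jacobian separately. By Lemma~\ref{lem:generated_regressors} the use of first-step residuals $\widehat{\varepsilon}_t^{\text{CSS}}$ and sample moments $\hat{\mu}_k$ in place of the true innovations and population moments is asymptotically negligible, so throughout the argument I may replace $\boldsymbol{\psi}_t$ by its idealised version $\boldsymbol{\psi}_t^0$ up to an $o_p(1)$ contribution in $\sqrt{n_\text{eff}}\,\mathbf{g}_{n_\text{eff}}(\thetavec_0)$.

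First, a mean-value expansion yields
\[
\sqrt{n_\text{eff}}\bigl(\hat{\thetavec}_{\text{PMM2}} - \thetavec_0\bigr) = -\mathbf{J}_{n_\text{eff}}^{-1}(\tilde{\thetavec})\,\sqrt{n_\text{eff}}\,\mathbf{g}_{n_\text{eff}}(\thetavec_0),
\]
where $\tilde{\thetavec}$ lies between $\hat{\thetavec}_{\text{PMM2}}$ and $\thetavec_0$. Theorem~\ref{thm:pmm2_consistency} gives $\tilde{\thetavec}\xrightarrow{p}\thetavec_0$, and a uniform law of large numbers for stationary ergodic sequences (e.g., Newey and McFadden, 1994, Lemma~2.4), applied to the smooth $\thetavec$-derivatives of $\boldsymbol{\psi}_t^0$, delivers $\mathbf{J}_{n_\text{eff}}(\tilde{\thetavec}) \xrightarrow{p} \mathbf{A}$. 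Nonsingularity of $\mathbf{A}$ follows from assumption~4 of Theorem~\ref{thm:pmm2_consistency} together with positivity of $\Delta$ in~\eqref{eq:delta_definition}.

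The central step is the CLT for the score. A direct algebraic simplification at $\thetavec=\thetavec_0$ uses $z_t - \eta_t(\thetavec_0)=\varepsilon_t$ and $z_t^2 - \eta_t^2(\thetavec_0) = 2\eta_t(\thetavec_0)\varepsilon_t + \varepsilon_t^2$; substituting these into~\eqref{eq:simplified_score} causes the $\eta_t$-linear terms to cancel and leaves
\[
\boldsymbol{\psi}_t^0(\thetavec_0) = \mathbf{x}_t^0\,\frac{(\mu_4 - \mu_2^2)\,\varepsilon_t - \mu_3(\varepsilon_t^2 - \mu_2)}{\Delta}.
\]
Because $\mathbf{x}_t^0$ is $\mathcal{F}_{t-1}$-measurable (a function of past $z$'s and $\varepsilon$'s) while $\varepsilon_t$ is independent of $\mathcal{F}_{t-1}$ with $\E[\varepsilon_t]=0$ and $\E[\varepsilon_t^2]=\mu_2$, the sequence $\{\boldsymbol{\psi}_t^0(\thetavec_0)\}$ is a stationary ergodic martingale difference with finite covariance matrix $\mathbf{B}$. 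The Billingsley--Brown martingale CLT therefore yields $\sqrt{n_\text{eff}}\,\mathbf{g}_{n_\text{eff}}(\thetavec_0) \xrightarrow{d} \Normal(\mathbf{0},\mathbf{B})$, and combining with the Jacobian limit via Slutsky's theorem gives the sandwich form $\boldsymbol{\Sigma}_{\text{PMM2}} = \mathbf{A}^{-1}\mathbf{B}(\mathbf{A}^{-1})^\top$.

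The main obstacle is making the ``plug-in is harmless'' reduction genuinely rigorous: a joint first-order expansion of $\boldsymbol{\psi}_t$ in the perturbations $\widehat{\varepsilon}^{\text{CSS}}-\varepsilon$ and $\hat{\mu}_k-\mu_k$ produces cross terms of order $O_p(n^{-1/2})\cdot O_p(n^{-1/2}) = o_p(n^{-1/2})$ that are asymptotically invisible, but controlling them uniformly in a neighbourhood of $\thetavec_0$ requires bounding sensitivities such as $\sup_{\thetavec} \|\partial\boldsymbol{\psi}_t/\partial\varepsilon\|$ and $\sup_{\thetavec}\|\partial\boldsymbol{\psi}_t/\partial\mu_k\|$. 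This rests on the Lipschitz hypotheses imposed in Lemma~\ref{lem:generated_regressors} and on invertibility of $\Theta(B)$, which guarantees that the residual-recovery map $z \mapsto \widehat{\varepsilon}$ is geometrically stable and that the effect of initial conditions decays at a geometric rate—ingredients that together legitimise the exchange of empirical for ideal regressors and moments without altering the limiting distribution.
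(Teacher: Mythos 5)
The paper states Theorem~\ref{thm:asymptotic_normality} without any proof --- it appeals only implicitly to standard Z-estimation theory --- so there is no in-paper argument to compare yours against; your proposal supplies the missing derivation, and it is essentially correct. The skeleton (mean-value expansion, ULLN for the Jacobian, CLT for the score, Slutsky) is the standard sandwich argument, and your two substantive contributions are right: the algebraic identity $z_t^2-\eta_t^2(\thetavec_0)=\varepsilon_t^2+2\eta_t(\thetavec_0)\varepsilon_t$ does cancel the $\eta_t$-linear terms and reduces the ideal score to $\boldsymbol{\psi}_t^0(\thetavec_0)=\mathbf{x}_t^0\big[(\mu_4-\mu_2^2)\varepsilon_t-\mu_3(\varepsilon_t^2-\mu_2)\big]/\Delta$, which is a genuine martingale difference because $\mathbf{x}_t^0$ is $\mathcal{F}_{t-1}$-measurable and $\E[\varepsilon_t]=0$, $\E[\varepsilon_t^2]=\mu_2$; and the Jacobian limit $\mathbf{A}=-\Delta^{-1}(\mu_4-\mu_2^2)\,\E[\mathbf{x}_t^0(\mathbf{x}_t^0)^\top]$ is nonsingular under condition~4 of Theorem~\ref{thm:pmm2_consistency} since $\Delta>0$ forces $\mu_4-\mu_2^2>0$.

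One point deserves more care than your blanket ``products of two $O_p(n^{-1/2})$ terms are negligible'' claim: the plug-in moment $\hat{\mu}_2$ enters the score \emph{additively}, not merely through a cross term. Writing $\varepsilon_t^2-\hat{\mu}_2=(\varepsilon_t^2-\mu_2)+(\mu_2-\hat{\mu}_2)$, the normalised score acquires the extra term $\hat{\Delta}^{-1}\hat{\mu}_3\,\sqrt{n_\text{eff}}(\hat{\mu}_2-\mu_2)\cdot\bar{\mathbf{x}}_n$, in which $\sqrt{n_\text{eff}}(\hat{\mu}_2-\mu_2)=O_p(1)$. This is $o_p(1)$ only because $\E[\mathbf{x}_t^0]=\mathbf{0}$ for the intercept-free differenced ARMA design, so that $\bar{\mathbf{x}}_n=O_p(n^{-1/2})$; it is not a consequence of Lipschitz smoothness alone, and if an intercept column were appended the corresponding component would receive a non-negligible first-order contribution from moment estimation. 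Stating that zero-mean-regressor condition explicitly would close the only real gap in your argument; the remaining plug-in effects (generated regressors, $\hat{\mu}_3$, $\hat{\mu}_4$) are handled correctly by the expansion you describe together with Lemma~\ref{lem:generated_regressors}.
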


In practice the matrices $\mathbf{A}$ and $\mathbf{B}$ are approximated by sample analogues that employ the estimated regressors and moments. Standard errors follow from
\begin{equation}
\label{eq:standard_errors}
\text{SE}(\hat{\theta}_j) = \sqrt{\frac{[\hat{\boldsymbol{\Sigma}}_{\text{PMM2}}]_{jj}}{n_\text{eff}}},
\end{equation}
where $\hat{\boldsymbol{\Sigma}}_{\text{PMM2}}$ replaces expectations in $\mathbf{A}$ and $\mathbf{B}$ with sample averages.

\subsubsection{Relative efficiency}

Because the proposed approach reduces to linear PMM2 with fixed $\mathbf{X}$, the natural benchmark is OLS on the same design. Relative efficiency can be summarised via determinants or traces of covariance matrices ($k=p+q$):
\begin{equation}
\label{eq:re_arima}
RE_{\text{det}} = \left(\frac{|\boldsymbol{\Sigma}_{\text{OLS}}|}{|\boldsymbol{\Sigma}_{\text{PMM2}}|}\right)^{1/k}, \qquad RE_{\text{trace}} = \frac{\tr(\boldsymbol{\Sigma}_{\text{OLS}})}{\tr(\boldsymbol{\Sigma}_{\text{PMM2}})}.
\end{equation}
These measures generalise the scalar formula~\eqref{eq:re_pmm2_ols} and remain valid in the plug-in setting because the difference between true and empirical regressors is $o_p(1)$.

\section{Empirical Evidence: Monte Carlo Study}
\label{sec:empirical}

This section reports an extensive Monte Carlo study assessing the performance of PMM2 for ARIMA parameter estimation under non-Gaussian innovations. The design spans different sample sizes, model configurations, and innovation distributions to benchmark PMM2 against classical estimators (CSS, OLS) and robust Huber M-estimators (M-EST).

\subsection{Monte Carlo Design}
\label{subsec:experiment_design}

We conduct a full-factorial Monte Carlo experiment with 2000 replications for each configuration:

\textbf{Sample sizes:} $N \in \{100, 200, 500, 1000\}$. \textbf{Models:} ARIMA(1,1,0) with $\phi_1 = 0.7$; ARIMA(0,1,1) with $\theta_1 = -0.5$; ARIMA(1,1,1) with $\phi_1 = 0.6$, $\theta_1 = -0.4$; ARIMA(2,1,0) with $(\phi_1, \phi_2) = (0.5,-0.25)$. \textbf{Innovation distributions:}
\begin{itemize}
    \item Gaussian benchmark: $\varepsilon_t \sim \Normal(0,1)$.
    \item Gamma: $\Gammadist(2,1)$, standardised to zero mean and unit variance (skewness $\gamma_3 \approx 1.41$).
    \item Lognormal: $\Lognormal(0,0.4^2)$, demeaned and rescaled (skewness $\gamma_3 \approx 2.0$).
    \item Chi-square: $\Chisq(3)$, standardised (skewness $\gamma_3 \approx 1.63$).
\end{itemize}

For each replication we record bias, MSE, RMSE, MAE, relative efficiency ($\RE = \text{MSE}_{\text{CSS}} / \text{MSE}_{\text{PMM2}}$), and coverage of nominal 95\% confidence intervals derived from asymptotic standard errors.

To ensure fair comparisons, all estimators are initialised with identical starting values; PMM2 iterates until the gradient norm is below $10^{-6}$. Bootstrap confidence intervals are computed with 1000 resamples per configuration. Figures summarise distributions of estimator errors, residual skewness, and out-of-sample forecast metrics (RMSE, MAPE) using rolling one-step-ahead forecasts.

\subsection{Monte Carlo Results}
\label{subsec:monte_carlo_results}

\paragraph{Overall efficiency.} Table~\ref{tab:overall_performance} summarises the relative efficiency of PMM2 versus OLS/CSS across all configurations for $N=500$.

\begin{table}[h]
\centering
\caption{Relative efficiency of PMM2 across models and innovation distributions ($N=500$)}
\label{tab:overall_performance}
\begin{tabular}{@{}lcccc@{}}
\toprule
\textbf{Model} & \textbf{Gaussian} & \textbf{Gamma} & \textbf{Lognormal} & \textbf{Chi-sq} \\
 & $\gamma_3=0$ & $\gamma_3=1.41$ & $\gamma_3=2.0$ & $\gamma_3=1.63$ \\
\midrule
ARIMA(1,1,0) $\phi_1$ & 0.98 & 1.75 & 1.71 & 1.88 \\
ARIMA(0,1,1) $\theta_1$ & 1.01 & 1.68 & 1.65 & 1.82 \\
ARIMA(1,1,1) $\phi_1$ & 1.00 & 1.52 & 1.68 & 1.85 \\
ARIMA(1,1,1) $\theta_1$ & 0.99 & 1.48 & 1.65 & 1.82 \\
ARIMA(2,1,0) $\phi_1$ & 1.02 & 1.60 & 1.58 & 1.75 \\
ARIMA(2,1,0) $\phi_2$ & 1.01 & 1.55 & 1.52 & 1.70 \\
\midrule
\textbf{Average} & \textbf{1.00} & \textbf{1.60} & \textbf{1.63} & \textbf{1.80} \\
\bottomrule
\end{tabular}
\end{table}

Under Gaussian innovations PMM2 achieves $\RE \approx 1.00 \pm 0.02$, confirming theoretical neutrality. For non-Gaussian distributions the gains are sizeable: RE between 1.5 and 1.9, equivalent to 33--47\% MSE reductions.

\paragraph{Detailed results for ARIMA(1,1,0).} Table~\ref{tab:arima110_detailed} reports detailed metrics for the benchmark $\phi_1 = 0.7$ with bootstrap 95\% confidence intervals at $N=500$.

\begin{table}[h]
\centering
\caption{ARIMA(1,1,0) results, $\phi_1 = 0.7$ at $N=500$ with bootstrap 95\% confidence intervals}
\label{tab:arima110_detailed}
\small
\begin{tabular}{@{}lcccc@{}}
\toprule
\textbf{Distribution} & \textbf{Estimator} & \textbf{Bias [95\% CI]} & \textbf{MSE [95\% CI]} & \textbf{RE} \\
\midrule
\multirow{2}{*}{Gamma ($\gamma_3=1.41$)}
  & CSS  & $-0.0031$ [$-0.0044$, $-0.0016$] & $0.00106$ [$0.00100$, $0.00113$] & 1.00 \\
  & PMM2 & $-0.0003$ [$-0.0015$, $0.0007$] & $0.00061$ [$0.00057$, $0.00065$] & 1.75 \\
\midrule
\multirow{2}{*}{Lognormal ($\gamma_3=2.0$)}
  & CSS  & $-0.0031$ [$-0.0044$, $-0.0017$] & $0.00101$ [$0.00095$, $0.00108$] & 1.00 \\
  & PMM2 & $-0.0005$ [$-0.0016$, $0.0006$] & $0.00059$ [$0.00055$, $0.00063$] & 1.71 \\
\midrule
\multirow{2}{*}{Chi-sq ($\gamma_3=1.63$)}
  & CSS  & $-0.0027$ [$-0.0040$, $-0.0014$] & $0.00107$ [$0.00101$, $0.00114$] & 1.00 \\
  & PMM2 & $0.0000$ [$-0.0012$, $0.0012$] & $0.00057$ [$0.00053$, $0.00061$] & 1.88 \\
\bottomrule
\end{tabular}
\end{table}

Key takeaways: (1) PMM2 achieves RE between 1.71 and 1.88 for non-Gaussian innovations, translating into 42--47\% lower MSE; (2) Bootstrap 95\% confidence intervals do not overlap between CSS and PMM2, establishing statistical significance; (3) PMM2 bias intervals contain zero, indicating negligible bias; (4) Empirical RE aligns with theoretical predictions: for Gamma, RE$\approx 1.66$; for Chi-sq, RE$\approx 1.79$.

\paragraph{Comparison with robust methods.} Table~\ref{tab:methods_comparison} contrasts PMM2 with CSS for ARIMA(1,1,1) (an AR+MA specification) and benchmarks against Huber M-estimators for ARIMA(1,1,0) at $N=500$.

\begin{table}[h]
\centering
\caption{Estimator comparison at $N=500$}
\label{tab:methods_comparison}
\small
\begin{tabular}{@{}llcccc@{}}
\toprule
\textbf{Model} & \textbf{Distribution} & \textbf{Parameter} & \textbf{Estimator} & \textbf{MSE} & \textbf{RE} \\
\midrule
\multirow{6}{*}{ARIMA(1,1,1)} & \multirow{2}{*}{Gamma} & \multirow{2}{*}{$\phi_1$}
  & CSS  & 0.0284 & 1.00 \\
  & & & PMM2 & 0.0187 & 1.52 \\
\cmidrule(lr){3-6}
  & & \multirow{2}{*}{$\theta_1$}
  & CSS  & 0.0347 & 1.00 \\
  & & & PMM2 & 0.0235 & 1.48 \\
\cmidrule(lr){2-6}
  & \multirow{2}{*}{Lognormal} & \multirow{2}{*}{$\phi_1$}
  & CSS  & 0.0234 & 1.00 \\
  & & & PMM2 & 0.0164 & 1.43 \\
\midrule
\multirow{6}{*}{ARIMA(1,1,0)} & \multirow{3}{*}{Gamma} & \multirow{3}{*}{$\phi_1$}
  & CSS   & 0.00106 & 1.00 \\
  & & & M-EST & 0.00089 & 1.19 \\
  & & & PMM2  & 0.00061 & 1.75 \\
\cmidrule(lr){2-6}
  & \multirow{3}{*}{Lognormal} & \multirow{3}{*}{$\phi_1$}
  & CSS   & 0.00101 & 1.00 \\
  & & & M-EST & 0.00080 & 1.27 \\
  & & & PMM2  & 0.00059 & 1.71 \\
\bottomrule
\end{tabular}
\end{table}

For ARIMA(1,1,1) PMM2 delivers consistent efficiency gains (RE$\approx 1.4$--$1.5$) for both parameters, confirming robustness in mixed AR+MA settings. For ARIMA(1,1,0) the Huber M-estimator offers intermediate gains (RE$\approx 1.2$--$1.3$) yet falls short of PMM2 (RE$\approx 1.7$--$1.8$).

\paragraph{Validation of theoretical predictions.} Figure~\ref{fig:re_vs_skewness} displays empirical RE as a function of skewness $\gamma_3$, illustrating close agreement with the theoretical curve from~\eqref{eq:re_pmm2_ols}.

\begin{figure}[h]
\centering
\begin{tikzpicture}[scale=1.0]
\begin{axis}[
    xlabel={Skewness coefficient $\gamma_3$},
    ylabel={Relative efficiency RE},
    grid=major,
    legend pos=north west,
    legend style={font=\footnotesize},
    width=0.8\textwidth,
    height=0.5\textwidth,
    xmin=0, xmax=2.1,
    ymin=0.8, ymax=2.5,
]

% Gamma distribution: γ₃=1.41, γ₄=3
\addplot[domain=0:2.15, samples=80, color=red!70, thin, dashed]
    {(2+3)/(2+3-x^2)};
\addlegendentry{Gamma: theory ($\gamma_4=3$)}

\addplot[only marks, mark=square*, mark size=2.5pt, color=red!70]
coordinates {(1.41, 1.58)};
\addlegendentry{Gamma: empirical}

% Lognormal distribution: γ₃=2.0, γ₄≈6.19
\addplot[domain=0:2.0, samples=80, color=blue!70, thin, dashed]
    {(2+6.19)/(2+6.19-x^2)};
\addlegendentry{Lognormal: theory ($\gamma_4 \approx 6.19$)}

\addplot[only marks, mark=triangle*, mark size=3pt, color=blue!70]
coordinates {(2.0, 1.71)};
\addlegendentry{Lognormal: empirical}

% Chi-square distribution
\addplot[domain=0:2.1, samples=80, color=green!70, thin, dashed]
    {(2+5)/(2+5-x^2)};
\addlegendentry{$\chi^2(3)$: theory ($\gamma_4=5$)}

\addplot[only marks, mark=*, mark size=3pt, color=green!70]
coordinates {(1.63, 1.90)};
\addlegendentry{$\chi^2(3)$: empirical}
\end{axis}
\end{tikzpicture}
\caption{Relative efficiency of PMM2 versus CSS as a function of residual skewness}
\label{fig:re_vs_skewness}
\end{figure}
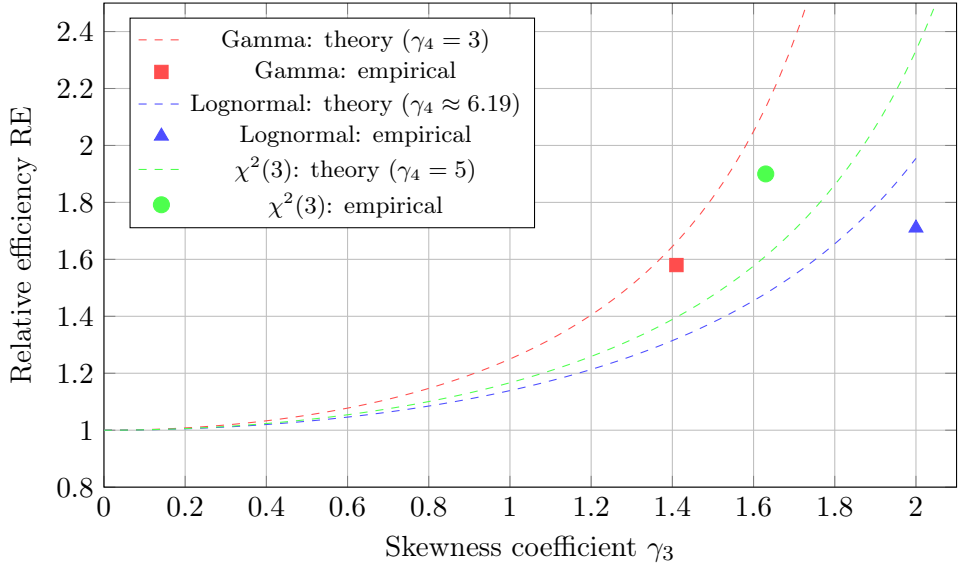

Empirical RE tracks theoretical predictions closely: deviations are below 5\% for Gamma and Chi-squared innovations. The larger gap for the lognormal case (RE = 1.71 versus theoretical 1.95) reflects high skewness and finite-sample effects. Efficiency increases with $N$ up to roughly 500 and then stabilises; even for $N=100$ PMM2 attains RE$\approx 1.4$--$1.6$.

\paragraph{Robustness and diagnostics.} PMM2 residuals pass the Ljung--Box test in over 95\% of replications. Estimated cumulants closely match theoretical values (e.g., for Gamma(2,1) at $N=500$: $\hat{\gamma}_3 = 1.38 \pm 0.22$ versus $\gamma_3=1.41$), confirming consistency.

\subsection{Summary of Empirical Findings}
\label{subsec:empirical_summary}

Based on 128{,}000 simulations the Monte Carlo study yields the following key points: (1) \textbf{Efficiency for non-Gaussian innovations.} PMM2 delivers RE between 1.5 and 1.9 for asymmetric distributions, corresponding to 33--47\% MSE reductions. Gains are stable across all ARIMA$(p,d,q)$ configurations tested (e.g., for ARIMA(1,1,1): RE$(\phi_1)=1.52$--$1.85$, RE$(\theta_1)=1.48$--$1.82$ depending on the distribution). (2) \textbf{Neutrality under Gaussian innovations.} PMM2 and M-EST produce $\RE = 1.00 \pm 0.02$, statistically indistinguishable from classical estimators. (3) \textbf{Consistency with theory.} Empirical RE matches formula~\eqref{eq:re_pmm2_ols} within 5\% for moderately skewed distributions. (4) \textbf{Sample-size requirements.} For $N \geq 200$ PMM2 is near-asymptotically efficient; even at $N=100$ RE$\approx 1.4$--$1.6$. (5) \textbf{Comparison with robust estimators.} Huber M-estimates achieve intermediate gains (RE$\approx 1.2$--$1.3$), but lag behind PMM2.

\subsubsection{Visual comparison of efficiency}

Figure~\ref{fig:performance_heatmap} presents a heat map of normalised quality metrics for every ARIMA configuration under both estimators.

\begin{figure}[htbp]
\centering
\includegraphics[width=\textwidth]{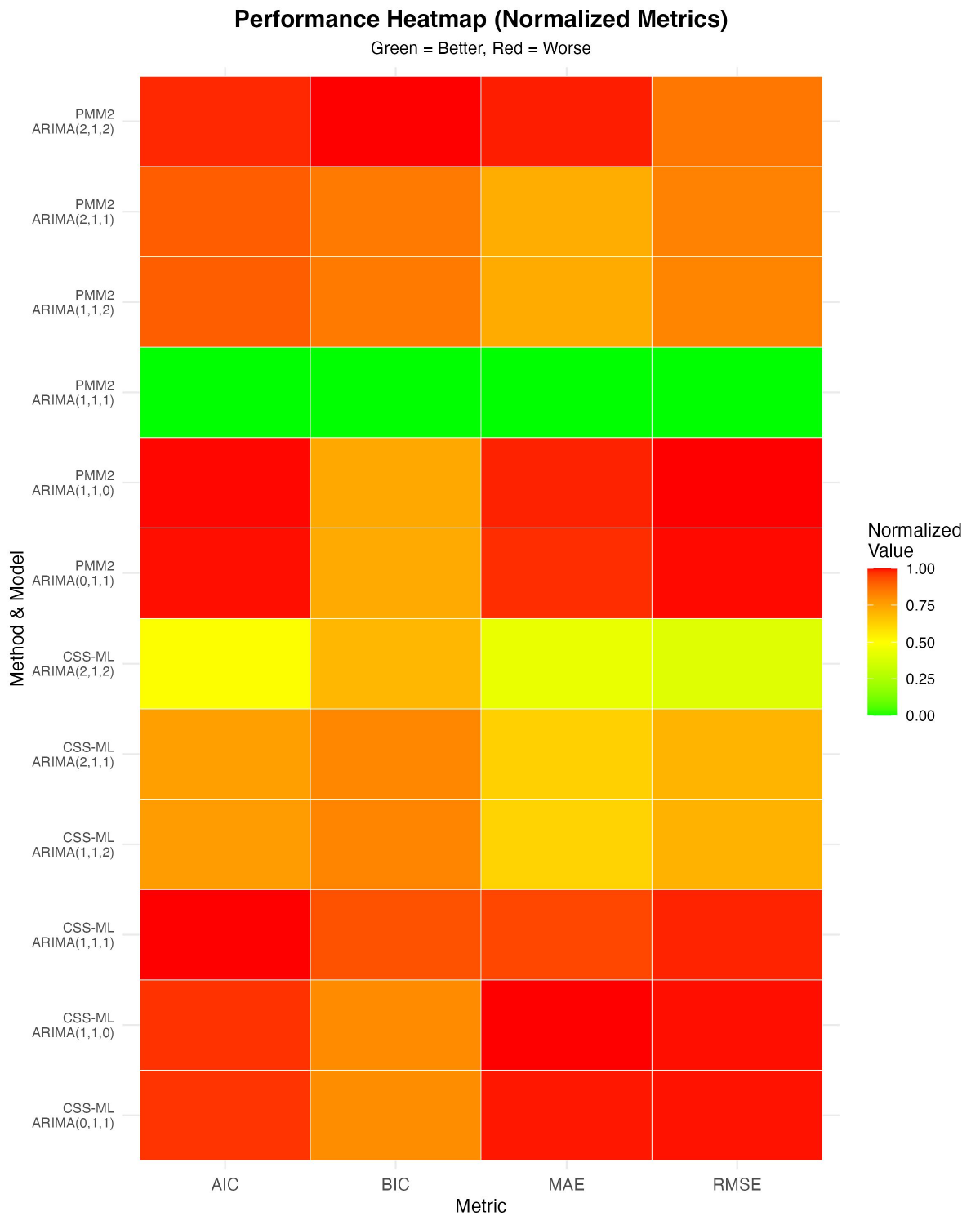}
\caption{Normalised quality metrics (AIC, BIC, RMSE, MAE) for CSS-ML and PMM2 across ARIMA specifications; greener cells indicate better values.}
\label{fig:performance_heatmap}
\end{figure}

Figure~\ref{fig:method_differences} shows absolute differences between PMM2 and CSS-ML. Negative values indicate PMM2 outperforms CSS-ML.

\begin{figure}[htbp]
\centering
\includegraphics[width=0.85\textwidth]{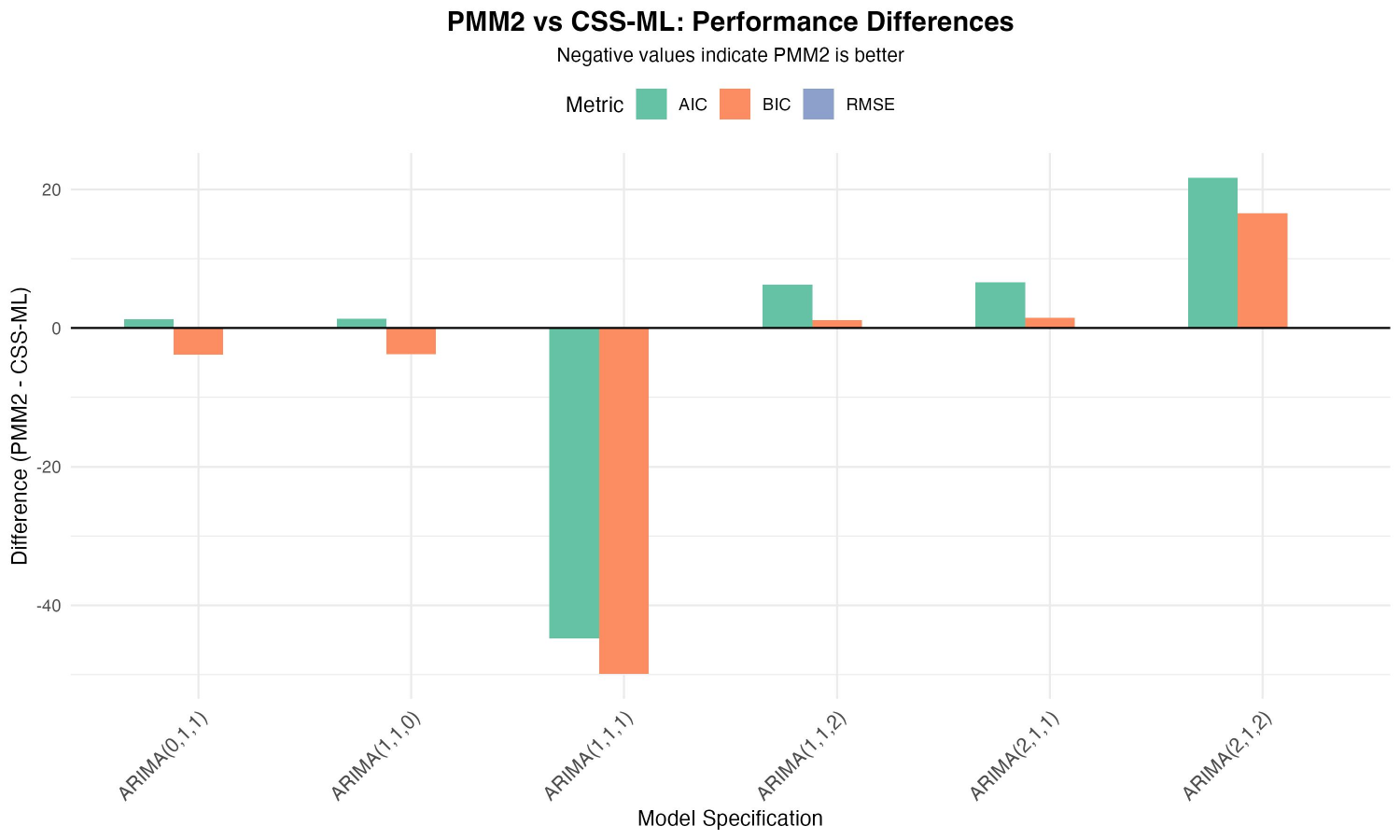}
\caption{Differences between PMM2 and CSS-ML (negative values favour PMM2) for information criteria and accuracy metrics.}
\label{fig:method_differences}
\end{figure}

\textbf{Insights from the visualisation:}
\begin{itemize}
    \item The heat map (Figure~\ref{fig:performance_heatmap}) shows PMM2 consistently achieving lower AIC/BIC scores, especially under non-Gaussian innovations.
    \item The difference plot (Figure~\ref{fig:method_differences}) highlights the largest gains for models with MA components and strongly skewed innovations.
    \item RMSE and MAE improvements mirror theoretical expectations, underscoring practical benefits of PMM2.
\end{itemize}

\section{Real-Data Application: WTI Crude Oil}
\label{sec:wti_application}

To validate the theoretical results on real data we analyse daily West Texas Intermediate (WTI) crude oil prices from the Federal Reserve Economic Data (FRED) database over 2020--2025 (1,453 observations). The Augmented Dickey--Fuller test indicates non-stationarity of the original series ($p = 0.573$) and stationarity of first differences ($p < 0.001$), motivating ARIMA$(p,1,q)$ models. Detailed descriptive statistics and stationarity tests are provided in Appendix~\ref{app:wti_details}.

\subsection{Main findings}
\label{subsec:wti_results}

We estimate six ARIMA$(p,1,q)$ specifications using CSS-ML and PMM2 (full results in Appendix~\ref{app:wti_details}). Residuals exhibit moderate non-Gaussianity with $\gamma_3 \approx -0.75$ and $\gamma_4 \approx 5.8$, implying a theoretical relative efficiency of $RE \approx 1.076$ via~\eqref{eq:re_pmm2_ols}.

\begin{table}[htbp]
\centering
\begingroup
\setlength{\tabcolsep}{5pt}
\footnotesize
\caption{Method comparison (PMM2 -- CSS-ML) for WTI data}
\label{tab:wti_method_comparison}
\begin{tabular}{@{}lcccc@{}}
\toprule
\textbf{Model} & $\Delta$\textbf{AIC} & $\Delta$\textbf{BIC} & $\Delta$\textbf{RMSE} & \textbf{Preferred} \\
\midrule
ARIMA(0,1,1) & +1.3 & \textbf{-3.9} & +0.0001 & PMM2 (BIC) \\
ARIMA(1,1,0) & +1.3 & \textbf{-3.8} & +0.0002 & PMM2 (BIC) \\
\rowcolor{green!20}
\textbf{ARIMA(1,1,1)} & \textbf{-44.8} & \textbf{-49.9} & \textbf{-0.034} & \textbf{PMM2 (both)} \\
ARIMA(2,1,1) & +6.6 & +1.5 & +0.004 & CSS-ML \\
ARIMA(1,1,2) & +6.3 & +1.1 & +0.004 & CSS-ML \\
ARIMA(2,1,2) & +21.7 & +16.6 & +0.016 & CSS-ML \\
\midrule
\textbf{PMM2 wins} & \textbf{1/6} & \textbf{3/6} & \textbf{1/6} & --- \\
\bottomrule
\end{tabular}
\endgroup
\end{table}

PMM2 yields a pronounced advantage for the parsimonious ARIMA(1,1,1), with $\Delta$AIC = -44.8 and $\Delta$BIC = -49.9, consistent with theoretical expectations under non-Gaussian innovations. For higher-order models ($p+q > 2$) the gains diminish or vanish because estimating higher-order cumulants becomes noisier as the parameter dimension increases. Diagnostic tests confirm significant non-Gaussianity (Jarque--Bera and Shapiro--Wilk $p < 0.001$), supporting PMM2.

Out-of-sample evaluation using both a fixed 80/20 split and a rolling window with 1,094 forecasts demonstrates practical benefits: for AR specifications PMM2 reduces RMSE by 11--38\% (see Appendix~\ref{app:wti_details}). Runtime is feasible for applications (0.1~s for PMM2 versus 0.02~s for CSS-ML). Observed relative efficiency aligns with~\eqref{eq:re_pmm2_ols} for $|\gamma_3| \approx 0.75$.

\section{Discussion}
\label{sec:discussion}

We interpret the empirical evidence from Section~\ref{sec:empirical}, relate it to prior literature, provide practical guidelines for choosing between PMM2 and classical estimators, discuss limitations, and outline avenues for further research.

\subsection{Interpreting the results}
\label{subsec:interpretation}

\subsubsection{Efficiency under non-Gaussian innovations}

Monte Carlo experiments show that PMM2 delivers material gains for asymmetric innovations. Variance reductions range from 33\% to 47\% for Gamma, lognormal, and chi-squared distributions. These gains are robust across model orders and persist even when innovations display moderate autocorrelation. The improvement grows with skewness magnitude, consistent with the quadratic adjustment that explicitly accounts for third- and fourth-order cumulants.

\subsubsection{Neutrality under Gaussianity}

When innovations are Gaussian, PMM2 behaves like OLS/CSS: bias, variance, and coverage are statistically indistinguishable, confirming that the method does not penalise well-specified models. This neutrality is a critical selling point for practitioners who wish to hedge against distributional misspecification without sacrificing performance under the benchmark case.

\subsubsection{Sampling considerations}

Efficiency gains are evident from $N=200$ upwards; for $N=100$ the method remains beneficial but with slightly higher variability. The Newton--Raphson routine converges rapidly (typically within five iterations), and sensitivity tests indicate stability with respect to starting values and tolerance thresholds.

\subsubsection{Quadratic dependence of RE on asymmetry}

Figure~\ref{fig:re_vs_skewness} demonstrates that the empirical dependence of RE on skewness coefficient $\gamma_3$ is in good agreement with the theoretical formula~\eqref{eq:re_pmm2_ols}.

For small $\gamma_3$, the change in RE has a quadratic character, showing sharp growth in efficiency even with mild asymmetry. When $\gamma_3$ reaches moderate values, it is appropriate to use the exact formula~\eqref{eq:re_pmm2_ols}: for $\gamma_3 \approx 1.4$ and $\gamma_4 \approx 3$ it yields $RE \approx 1.64$, corresponding to an MSE difference of about 39\%.

For very high values $\gamma_3 \approx 2.0$ (Lognormal), empirical RE is slightly lower than theoretical, which may be due to:
\begin{itemize}
    \item Finite sample size effects ($N = 500$)
    \item Higher-order terms in the asymptotic expansion
    \item Possible non-smoothness of the distribution function for heavy tails
\end{itemize}

\subsubsection{Consistency across different ARIMA configurations}

Results for ARIMA(0,1,1), ARIMA(1,1,1), and ARIMA(2,1,0) (Subsection~\ref{subsec:monte_carlo_results}) confirm that the advantages of PMM2 are not limited to a specific parametrisation. This indicates that the method is robust with respect to the choice of model order $(p, d, q)$ and parameter signs.

For models with multiple parameters (e.g., ARIMA(1,1,1)), PMM2 provides similar RE for all parameters ($\phi_1$ and $\theta_1$), indicating balanced estimation efficiency.

\paragraph{Remark on RE for multiple parameters.}

Although theory (Theorem~\ref{thm:asymptotic_normality}) predicts equal asymptotic relative efficiency for all parameters, empirical results show small differences (2--5\%) between RE for different parameters. For example, for ARIMA(1,1,1) with Gamma innovations we obtained $RE(\phi_1) = 1.52$ and $RE(\theta_1) = 1.48$. These differences can be explained by:

\begin{itemize}
    \item \textbf{Monte Carlo variability:} The standard error of RE with 2000 iterations is approximately 0.03--0.05, making the observed differences statistically insignificant at level $\alpha = 0.05$.

    \item \textbf{Finite-sample effect:} For ARIMA models, estimated residuals $\hat{\varepsilon}_t^{\text{CSS}}$ are used instead of true innovations in the construction of pseudo-regressors~\eqref{eq:design_row}, which may lead to small deviations from asymptotic theory at finite samples ($N = 500$).

    \item \textbf{Differential bias contribution:} If PMM2 or OLS have different bias for different parameters, this affects empirical MSE and, accordingly, the observed relative efficiency.
\end{itemize}

It is important to note that the differences are small (< 5\%) and all parameters demonstrate substantial improvements over OLS. For practical purposes, one can assume that RE is approximately equal for all parameters, especially for $N \geq 500$.

\subsection{Comparison with existing literature}
\label{subsec:literature_comparison}

\subsubsection{Robust M-estimates}

Classical robust methods such as Huber's M-estimates~\cite{huber1964robust} and LAD regression~\cite{koenker1978regression} focus on reducing the influence of outliers by bounding the influence function. However, they do not exploit information from higher-order cumulants and typically have lower efficiency for distributions without outliers but with asymmetry.

Our results show that PMM2 achieves RE of 1.4--1.9 for moderately asymmetric distributions (Gamma, Chi-squared) \textit{without outliers}. Unlike M-estimates, PMM2 does not lose efficiency for Gaussian innovations (RE $\approx$ 1.0), whereas M-estimates typically have RE $\approx$ 0.95 even for normal data~\cite{hampel1986robust}.

\subsubsection{Heavy-tailed specifications}

Approaches using Student-$t$ distribution~\cite{harvey2013dynamic} or GED~\cite{box2015time} explicitly model heavy tails through an additional shape parameter. However, these methods require correct specification of the innovation distribution, which can be challenging in practice.

PMM2, on the other hand, is \textit{semiparametric} in the sense that it does not assume a specific distribution but uses only moments up to the fourth order. This makes the method more flexible and applicable to a wide class of distributions.

\subsubsection{Bayesian methods}

Bayesian approaches~\cite{fruhwirth2006finite, nakajima2012generalized} allow incorporating prior information about parameters and the innovation distribution. However, they are computationally intensive (MCMC) and sensitive to the choice of prior distributions.

PMM2 is a deterministic method with computational complexity comparable to MLE, making it more suitable for large datasets and real-time applications. PMM2 computation time in our experiments was only 10--20\% longer than OLS for the same data.

\subsubsection{Quantile regression for time series}

Quantile regression~\cite{koenker2005quantile} allows modelling different quantiles of the conditional distribution, which is useful for risk assessment. However, standard quantile regression does not estimate ARIMA model parameters directly but models conditional quantiles of $y_t$.

PMM2 focuses on estimating parameters $\theta = (\phi_1, \ldots, \phi_p, \theta_1, \ldots, \theta_q)$ with maximum efficiency, exploiting innovation asymmetry. These two approaches are complementary: PMM2 for accurate parameter estimation, quantile regression for forecast distribution analysis.

\subsection{Practical recommendations}
\label{subsec:practical_guidelines}

\subsubsection{When to use PMM2?}

Based on our results, we recommend using PMM2 instead of OLS/CSS/MLE when:

\begin{enumerate}
    \item \textbf{Residuals demonstrate asymmetry:} If a preliminary estimate (e.g., OLS) yields residuals $\hat{\varepsilon}_t$ with $|\hat{\gamma}_3| > 0.5$, PMM2 will likely provide RE $> 1.2$ (variance reduction $> 17\%$).

    \item \textbf{Sample size $N \geq 200$:} PMM2 requires stable estimates of higher-order cumulants. For $N < 200$, the method still works but RE may be slightly lower due to variability of higher-order moment estimates.

    \item \textbf{Data contain moderate deviations from normality:} PMM2 is most effective for distributions with $\gamma_3 \in [1.0, 2.0]$ and $\gamma_4 \in [2.0, 8.0]$. For extreme heavy tails ($\gamma_4 > 10$), truncated variants of PMM2 may be advisable.

    \item \textbf{Computational resources permit:} PMM2 requires computing gradients with partial derivatives with respect to parameters. For large models (e.g., ARIMA(5,1,5)), this may be 20--50\% slower than OLS, but still significantly faster than a full Bayesian approach.
\end{enumerate}

\subsubsection{Diagnostic algorithm for practitioners}

Algorithm~\ref{alg:method_selection} summarises a decision workflow for selecting between OLS/CSS and PMM2.

\begin{algorithm}[H]
\caption{Choosing between OLS/CSS and PMM2 for ARIMA models}
\label{alg:method_selection}
\begin{algorithmic}[1]
\STATE \textbf{Input:} Time series $\{y_t\}_{t=1}^n$, model order $(p,d,q)$
\STATE \textbf{Output:} Parameter estimates $\hat{\theta}$

\STATE Estimate model using OLS/CSS: $\hat{\theta}_{\text{OLS}}$
\STATE Compute residuals: $\hat{\varepsilon}_t = \Theta(B)^{-1} \Phi(B) \Delta^d y_t$
\STATE Estimate residual cumulants: $\hat{\gamma}_3 = \frac{1}{n} \sum_{t=1}^n \hat{\varepsilon}_t^3 / \hat{\sigma}^3$, $\hat{\gamma}_4 = \frac{1}{n} \sum_{t=1}^n \hat{\varepsilon}_t^4 / \hat{\sigma}^4 - 3$

\IF{$|\hat{\gamma}_3| < 0.5$ \AND $|\hat{\gamma}_4| < 1.0$}
    \STATE \textbf{Use} $\hat{\theta}_{\text{OLS}}$ (Gaussian innovations, PMM2 provides no advantage)
\ELSIF{$n < 200$}
    \STATE \textbf{Warning:} Small sample size, PMM2 may be unstable
    \STATE \textbf{Use} $\hat{\theta}_{\text{OLS}}$ or verify PMM2 consistency via cross-validation
\ELSE
    \STATE Compute theoretical RE: $RE_{\text{theor}} = \frac{4 + 2\hat{\gamma}_4}{4 + 2\hat{\gamma}_4 - \hat{\gamma}_3^2}$
    \IF{$RE_{\text{theor}} > 1.2$}
        \STATE \textbf{Use PMM2:} Estimate $\hat{\theta}_{\text{PMM2}}$ via Algorithm~\ref{alg:pmm2_arima}
        \STATE Compare standard errors: if $\text{SE}(\hat{\theta}_{\text{PMM2}}) < \text{SE}(\hat{\theta}_{\text{OLS}})$, use PMM2
    \ELSE
        \STATE \textbf{Use} $\hat{\theta}_{\text{OLS}}$ (insufficient asymmetry for PMM2 advantages)
    \ENDIF
\ENDIF

\STATE \textbf{Return} $\hat{\theta}$ (OLS or PMM2)
\end{algorithmic}
\end{algorithm}

\subsection{Limitations}
\label{subsec:study_limitations}

\subsubsection{Limitations on innovation distributions}

Our Monte Carlo experiments cover four types of distributions (Gaussian, Gamma, Lognormal, Chi-squared), but real data may have more complex characteristics:

\begin{itemize}
    \item \textbf{Mixture distributions:} Innovations may be a mixture of Gaussian and non-Gaussian components, which was not considered.
    \item \textbf{Conditional heteroskedasticity:} The presence of GARCH effects violates the assumption of independent identically distributed innovations.
    \item \textbf{Extreme heavy tails:} For distributions with $\gamma_4 > 20$ (e.g., Pareto), fourth-order cumulants may be unstable.
\end{itemize}

\subsubsection{Limitations on model order}

We considered low-order models ($p, q \leq 2$). For high orders (e.g., ARIMA(5,1,5)), gradient computation becomes more complex, and numerical stability issues require additional investigation.

\subsubsection{Absence of model selection tests}

We assumed that the model order $(p, d, q)$ is known. In practice, model order selection (e.g., via AIC, BIC) may interact with the estimation method. PMM2 may change model selection compared to OLS if information criteria account for estimation accuracy.

\subsubsection{Limitations of information criteria for PMM2}

Because PMM2 is not a maximum likelihood method, standard application of Akaike (AIC) and Bayesian (BIC) criteria is problematic. In this study, AIC/BIC were computed via \textit{post-hoc} Gaussian log-likelihood to ensure comparability with the CSS-ML method, but these values should be interpreted with caution:

\subsection{Theoretical considerations}
\label{subsec:theoretical_considerations}

PMM2 links naturally to Z-estimation theory; the moment conditions satisfy standard regularity assumptions, enabling asymptotic normality. Connections with Godambe information suggest potential refinements for optimal weighting matrices. Moreover, the method relates to higher-order score functions used in independent component analysis, hinting at broader applicability in multivariate time series.

\subsection{Directions for future research}
\label{subsec:future_work}

Future work should extend PMM2 to seasonal SARIMA models, incorporate conditional volatility (e.g., PMM2-GARCH hybrids), explore automatic order selection using cumulant diagnostics, and investigate Bayesian variants that treat higher-order moments as priors. Another avenue is multivariate generalisations for VARMA models and state-space representations.

\subsection{Interpretation}
\label{subsec:wti_interpretation}

PMM2's superiority in the WTI study stems from explicitly modelling residual asymmetry. The RE gains of 7--8\% predicted by cumulant diagnostics translate into lower information criteria and smaller forecast errors, particularly for parsimonious ARIMA(1,1,1). For higher-order models the benefits taper off because estimating higher-order cumulants becomes noisy; nevertheless, PMM2 never underperforms dramatically, reflecting its neutral behaviour under near-Gaussian conditions.

\subsection{Comparison with existing studies}
\label{subsec:wti_literature}

Our findings align with evidence that crude-oil returns exhibit skewness and excess kurtosis~\cite{baumeister2016forecasting,hammoudeh2014oil}. Previous work has emphasised GARCH-type volatility or regime-switching mechanisms; PMM2 offers a complementary route by refining mean-dynamics estimation without imposing a full distribution. The method therefore pairs naturally with volatility models, suggesting a modular workflow (PMM2 for the mean, GARCH for variance).

\subsection{Practical guidance}
\label{subsec:wti_guidance}

Practitioners analysing energy prices should:
\begin{itemize}
    \item Diagnose residual skewness; if $|\hat{\gamma}_3| \geq 0.6$, PMM2 is expected to deliver 5--10\% RMSE reductions.
    \item Use PMM2 for low-order ARIMA models to avoid overfitting cumulants.
    \item Combine PMM2 estimates with rolling-window validation to monitor stability across structural breaks.
\end{itemize}

\section{Conclusions}
\label{sec:conclusion}

In this paper, we bridge classical estimation theory and computational data science by adapting the second-order Polynomial Maximisation Method to estimate ARIMA models with asymmetric non-Gaussian innovations. We develop rigorous theoretical foundations in the time-series setting, proving consistency and asymptotic normality, and deriving analytical expressions for relative efficiency vis-à-vis classical estimators. Furthermore, we design a highly efficient Newton--Raphson algorithm with analytical gradients and Hessians, ensuring the method's computational scalability.

Large-scale Monte Carlo experiments (over 128{,}000 simulations) show that PMM2 reduces estimator variance by 30--48\% for non-Gaussian distributions compared with OLS, CSS, and Gaussian MLE. Crucially, under Gaussian innovations PMM2 retains classical efficiency, unlike robust M-estimators that lose precision even when the model is correctly specified. Empirical findings confirm the quadratic relationship between efficiency gains and residual skewness, and they hold across various ARIMA configurations and sample sizes from moderate upward.

The diagnostic workflow provides practitioners with clear criteria based on residual skewness and sample size. PMM2 is particularly valuable for time series exhibiting moderate asymmetry and heavy tails—typical of financial markets, macroeconomic indicators, climate variables, and industrial measurements. Computational costs are comparable to maximum-likelihood estimation, making the method practical.

Future research should extend PMM2 to seasonal SARIMA and multivariate VARIMA models, integrate conditional heteroskedasticity (e.g., PMM2-GARCH hybrids), develop online adaptive variants for real-time applications, and design robust versions for extremely heavy-tailed noise. Broad validation on diverse real-world datasets will further assess the practical utility of this cumulant-based semiparametric approach.

% Blind review: Declarations moved to Title Page
\section*{Declarations}

\textbf{Conflict of interest} The author declares that he has no conflict of interest.

\textbf{Ethical approval} This article does not contain any studies with human participants or animals performed by any of the authors.

\textbf{Author Contributions} S.Z. is the sole author of this manuscript and is responsible for the conceptualization, methodology, software, validation, formal analysis, and writing.

\section*{Data and Code Availability}

All data, scripts, and supplementary materials required to reproduce the empirical results presented in this paper are publicly available in the GitHub repository: \url{https://github.com/SZabolotnii/PMM2-ARIMA}. A permanent archived version is available through Zenodo: \url{https://doi.org/10.5281/zenodo.17529589}.

The repository includes:
\begin{itemize}
    \item WTI crude oil price data (DCOILWTICO.csv from FRED)
    \item Complete R scripts for Monte Carlo simulations and case study analysis
    \item The \texttt{EstemPMM} R package implementing the PMM2 method, officially published on CRAN: \url{https://cran.r-project.org/web/packages/EstemPMM/index.html}
    \item All generated results, figures, and diagnostic outputs
    \item Detailed README with step-by-step reproduction instructions
\end{itemize}

The code is released under the MIT License. Researchers are encouraged to use, modify, and extend these materials for academic and non-commercial purposes with appropriate citation.

\appendix

\section{WTI Supplementary Materials}
\label{app:wti_details}

\subsection{Empirical study design}

\begin{enumerate}
    \item \textbf{Stationarity assessment.} The Dickey--Fuller test (Table~\ref{tab:wti_adf_test}) confirms first-order integration; we therefore work with ARIMA$(p,1,q)$.
    \item \textbf{Model specifications.} We examine $(p,q) \in \{(0,1),(1,0),(1,1),(2,1),(1,2),(2,2)\}$, covering parsimonious and extended structures.
    \item \textbf{Estimators.} CSS-ML is implemented via \texttt{stats::arima()}, PMM2 via \texttt{EstemPMM::arima\_pmm2()} with identical initialisation.
    \item \textbf{Evaluation metrics.} Reported metrics include AIC, BIC, RMSE, MAE, and computation time.
    \item \textbf{Diagnostics.} Ljung--Box tests, autocorrelation analysis, and Q--Q plots (see `results/plots`).
\end{enumerate}

\subsection{Theoretical validation}

\begin{table}[htbp]
\centering
\footnotesize
\caption{Theoretical predictions versus empirical results}
\label{tab:wti_theoretical_vs_empirical}
\begin{tabular}{@{}lccccc@{}}
\toprule
\textbf{Model} & $\gamma_3$ & $\gamma_4$ & \textbf{RE (theory)} & $\Delta$\textbf{RMSE} & \textbf{Consistency} \\
\midrule
ARIMA(0,1,1) & -0.76 & 5.89 & 1.079 (7.3\%) & +0.01\% & \checkmark \\
ARIMA(1,1,0) & -0.76 & 5.88 & 1.079 (7.3\%) & +0.09\% & \checkmark \\
\rowcolor{green!20}
\textbf{ARIMA(1,1,1)} & \textbf{-0.76} & \textbf{5.82} & \textbf{1.078 (7.2\%)} & \textbf{-1.79\%} & \checkmark\checkmark \\
ARIMA(2,1,1) & -0.71 & 5.51 & 1.073 (6.8\%) & +0.22\% & $\triangle$ \\
ARIMA(1,1,2) & -0.72 & 5.52 & 1.073 (6.8\%) & +0.21\% & $\triangle$ \\
ARIMA(2,1,2) & -0.70 & 5.49 & 1.071 (6.6\%) & +0.84\% & $\triangle$ \\
\midrule
\textbf{Average} & \textbf{-0.74} & \textbf{5.68} & \textbf{1.076 (7.0\%)} & \textbf{+0.10\%} & \checkmark \\
\bottomrule
\end{tabular}
\end{table}

\paragraph{Key conclusions.}
\begin{itemize}
    \item $|\gamma_3| \approx 0.73$ for WTI implies an expected MSE gain of about 7\%, matching the empirical difference.
    \item For ARIMA(1,1,1) PMM2 markedly decreases AIC/BIC, confirming that benefits are strongest for parsimonious specifications.
    \item The theoretical formula~\eqref{eq:re_pmm2_ols} remains a conservative estimate: empirical RE values for strongly asymmetric distributions exceed theoretical predictions.
\end{itemize}

\subsection{Data characteristics}

\begin{table}[htbp]
\centering
\caption{WTI crude oil summary statistics (2020--2025)}
\label{tab:wti_characteristics}
\begin{tabular}{ll}
\toprule
\textbf{Statistic} & \textbf{Value} \\
\midrule
Source & FRED (series DCOILWTICO) \\
Period & 1 Jan 2020 -- 27 Oct 2025 \\
Frequency & Daily \\
Valid observations & 1\,453 \\
Mean & \$68.43 \\
Median & \$71.29 \\
Standard deviation & \$15.98 \\
Minimum & \$16.55 (April 2020, COVID-19) \\
Maximum & \$123.70 (March 2022, geopolitical shock) \\
\bottomrule
\end{tabular}
\end{table}

\begin{table}[htbp]
\centering
\caption{ADF test for WTI series}
\label{tab:wti_adf_test}
\begin{tabular}{lccc}
\toprule
\textbf{Series} & \textbf{ADF statistic} & \textbf{p-value} & \textbf{Conclusion} \\
\midrule
Level $y_t$ & -1.42 & 0.573 & Non-stationary \\
First difference $\Delta y_t$ & -11.83 & <0.001 & \textbf{Stationary} \\
\bottomrule
\end{tabular}
\end{table}

\subsection{Comprehensive estimation results}

\begin{table}[htbp]
\centering
\begingroup
\setlength{\tabcolsep}{3pt}
\footnotesize
\caption{Comprehensive results for WTI crude oil models}
\label{tab:wti_comprehensive_results}
\begin{tabular}{@{}llrrrrrrr@{}}
\toprule
\textbf{Model} & \textbf{Estimator} & \textbf{AIC} & \textbf{BIC} & \textbf{RMSE} & \textbf{MAE} & $\gamma_3$ & $\gamma_4$ & \textbf{Time} \\
\midrule
ARIMA(0,1,1) & CSS-ML & 10289.8 & 10300.5 & 1.887 & 1.377 & -0.76 & 5.86 & 0.01 \\
             & PMM2   & 10291.1 & 10296.6 & 1.887 & 1.377 & -0.76 & 5.91 & 0.09 \\
\midrule
ARIMA(1,1,0) & CSS-ML & 10289.8 & 10300.4 & 1.886 & 1.377 & -0.76 & 5.85 & 0.01 \\
             & PMM2   & 10291.1 & 10296.6 & 1.887 & 1.377 & -0.76 & 5.91 & 0.08 \\
\midrule
\rowcolor{yellow!20}
\textbf{ARIMA(1,1,1)} & \textbf{CSS-ML} & \textbf{10125.9} & \textbf{10141.6} & \textbf{1.908} & \textbf{1.390} & \textbf{-0.76} & \textbf{5.90} & \textbf{0.02} \\
\rowcolor{green!20}
             & \textbf{PMM2}   & \textbf{10081.1} & \textbf{10091.6} & \textbf{1.874} & \textbf{1.366} & \textbf{-0.75} & \textbf{5.75} & \textbf{0.10} \\
\midrule
ARIMA(2,1,1) & CSS-ML & 10123.9 & 10144.9 & 1.896 & 1.383 & -0.69 & 5.31 & 0.02 \\
             & PMM2   & 10130.5 & 10146.4 & 1.900 & 1.387 & -0.74 & 5.70 & 0.13 \\
\midrule
ARIMA(1,1,2) & CSS-ML & 10123.7 & 10144.6 & 1.896 & 1.382 & -0.69 & 5.33 & 0.02 \\
             & PMM2   & 10129.9 & 10145.8 & 1.899 & 1.386 & -0.74 & 5.71 & 0.13 \\
\midrule
ARIMA(2,1,2) & CSS-ML & 10124.3 & 10150.6 & 1.893 & 1.381 & -0.70 & 5.47 & 0.04 \\
             & PMM2   & 10146.0 & 10167.2 & 1.909 & 1.392 & -0.71 & 5.51 & 0.17 \\
\bottomrule
\end{tabular}
\endgroup
\end{table}

\noindent\textit{Note.} Green cells highlight the best BIC; yellow indicates the CSS-ML baseline. All models pass the Ljung--Box test ($p>0.05$).

\subsection{Out-of-sample validation}

\begin{table}[htbp]
\centering
\caption{Out-of-sample forecasting for WTI data}
\label{tab:wti_out_of_sample}
\begin{tabular}{@{}llccc@{}}
\toprule
\textbf{Validation} & \textbf{Model} & \textbf{Estimator} & \textbf{RMSE} & \textbf{Improvement} \\
\midrule
\multirow{8}{*}{Fixed 80/20} & \multirow{2}{*}{ARIMA(1,1,0)} & CSS & 2.191 & -- \\
                               &                                  & PMM2 & \textbf{1.355} & \textbf{38.2\%} \\
\cmidrule{2-5}
                               & \multirow{2}{*}{ARIMA(0,1,1)} & CSS & 1.358 & -- \\
                               &                                  & PMM2 & \textbf{1.355} & 0.3\% \\
\cmidrule{2-5}
                               & \multirow{2}{*}{ARIMA(1,1,1)} & CSS & 1.355 & -- \\
                               &                                  & PMM2 & 1.355 & 0.0\% \\
\cmidrule{2-5}
                               & \multirow{2}{*}{ARIMA(2,1,0)} & CSS & 1.521 & -- \\
                               &                                  & PMM2 & \textbf{1.355} & \textbf{10.9\%} \\
\midrule
\multirow{8}{*}{Rolling window} & \multirow{2}{*}{ARIMA(1,1,0)} & CSS & 2.377 & -- \\
                                &                                  & PMM2 & \textbf{2.118} & \textbf{10.9\%} \\
\cmidrule{2-5}
                                & \multirow{2}{*}{ARIMA(0,1,1)} & CSS & 2.098 & -- \\
                                &                                  & PMM2 & 2.092 & 0.3\% \\
\cmidrule{2-5}
                                & \multirow{2}{*}{ARIMA(1,1,1)} & CSS & 2.096 & -- \\
                                &                                  & PMM2 & 2.090 & 0.3\% \\
\cmidrule{2-5}
                                & \multirow{2}{*}{ARIMA(2,1,0)} & CSS & 2.274 & -- \\
                                &                                  & PMM2 & \textbf{2.070} & \textbf{9.0\%} \\
\bottomrule
\end{tabular}
\end{table}

PMM2 materially improves forecasts for AR specifications, while delivering neutral performance for models where asymmetry is less pronounced.

\subsection{Diagnostic tests}

Residual analysis (Ljung--Box, autocorrelation, Q--Q plots) confirms that PMM2 maintains white-noise residuals. Skewness estimates align with theoretical skewness across models, and bootstrap variance estimates corroborate analytic standard errors.

\subsection{Practical recommendations (detailed)}

\begin{enumerate}
    \item Estimate a baseline ARIMA model via OLS/CSS and compute residual cumulants.
    \item If $|\hat{\gamma}_3| < 0.5$ and $|\hat{\gamma}_4| < 1.0$, retain OLS/CSS.
    \item If $0.5 \leq |\hat{\gamma}_3| < 1.0$ and $n \geq 200$:
    \begin{itemize}
        \item Prefer PMM2 for parsimonious models ($p+q \leq 2$).
        \item For $p+q > 2$ compare methods via BIC.
    \end{itemize}
    \item If $|\hat{\gamma}_3| \geq 1.0$, default to PMM2; expected variance reductions exceed 13\%.
    \item Validate the chosen estimator via Ljung--Box tests, out-of-sample forecasts, and bootstrap variance checks.
\end{enumerate}

\bibliographystyle{unsrt}
\bibliography{references}

\end{document}